\newtheorem{alg}{Algorithm}
\newtheorem{cor}{Corallary}
\newtheorem{thm}[theorem]{Theorem}
\begin{document}

\newcommand{\argmin}{\text{argmin}}
\newcommand{\eps}{\varepsilon}
\newcommand{\To}{\longrightarrow}
\newcommand{\h}{\mathcal{H}}
\newcommand{\s}{\mathcal{S}}
\newcommand{\A}{\mathcal{A}}
\newcommand{\J}{\mathcal{J}}
\newcommand{\M}{\mathcal{M}}
\newcommand{\W}{\mathcal{W}}
\newcommand{\BOP}{\mathbf{B}}
\newcommand{\BH}{\mathbf{B}(\mathcal{H})}
\newcommand{\KH}{\mathcal{K}(\mathcal{H})}
\newcommand{\Real}{\mathbb{R}}
\newcommand{\Complex}{\mathbb{C}}
\newcommand{\Field}{\mathbb{F}}
\newcommand{\RPlus}{\Real^{+}}
\newcommand{\Polar}{\mathcal{P}_{\s}}
\newcommand{\Poly}{\mathcal{P}(E)}
\newcommand{\EssD}{\mathcal{D}}
\newcommand{\Lom}{\mathcal{L}}
\newcommand{\States}{\mathcal{T}}
\newcommand{\abs}[1]{\left\vert#1\right\vert}
\newcommand{\set}[1]{\left\{#1\right\}}
\newcommand{\seq}[1]{\left<#1\right>}
\newcommand{\essnorm}[1]{\norm{#1}_{\ess}}
\newcommand{\bu}{{\bf u}}
\newcommand{\obu}{{\overline{{\bf u}}}}
\newcommand{\obv}{{\overline{\bf v}}}
\newcommand{\bv}{{\bf v}}
\newcommand{\bw}{{\bf w}}
\newcommand{\bs}{{\bf s}}
\newcommand{\bg}{{\bf g}}
\newcommand{\bq}{{\bf q}}
\newcommand{\bx}{{\bf x}}
\newcommand{\bp}{{\bf p}}
\newcommand{\br}{{\bf r}}
\newcommand{\by}{{\bf y}}
\newcommand{\bE}{{\bf E}}
\newcommand{\bn}{{\bf n}}
\newcommand{\be}{{\bf e}}
\newcommand{\bU}{{\bf U}}
\newcommand{\bP}{{\bf P}}
\newcommand{\bo}{{\bf 0}}
\newcommand{\bff}{{\bf f}}
\newcommand{\bphi}{{\boldsymbol \phi}}
\newcommand{\obphi}{{\overline{\boldsymbol \phi}}}
\newcommand{\bvarphi}{{\boldsymbol \varphi}}
\newcommand{\bX}{{\bf X}}
\newcommand{\bV}{{\bf V}}
\newcommand{\bH}{{\bf H}}
\newcommand{\bQ}{{\bf Q}}
\newcommand{\bomega}{{\boldsymbol \omega}}
\newcommand{\bpsi}{{\boldsymbol \psi}}
\newcommand{\bfeta}{{\boldsymbol \eta}}
\newcommand{\bchi}{{\boldsymbol \chi}}
\newcommand{\bzeta}{{\boldsymbol \zeta}}
\newcommand{\blambda}{{\boldsymbol \lambda}}

\title{An Online, Dynamic Amplitude-Correcting Gradient Estimation Technique to Align X-ray Focusing Optics 
}


\author{Sean Breckling \and
Leora E. Dresselhaus-Marais \and
Bernard Kozioziemski \and 
Michael C. Brennan \and
Malena Espa\~{n}ol \and
Ryan Coffee \and
Sunam Kim \and
Sangsoo Kim \and
Daewoong Nam \and
Arnulfo Gonzalez \and
Margaret Lund \and
Jesse Adams \and
Jordan Pillow \and
Eric Machorro \and
Daniel Champion \and
Kevin Joyce \and
Ajana\'{e} Williams \and
Marylesa Howard}

\authorrunning{Breckling, et al} 
\titlerunning{An Online, Dynamic Amplitude-Correcting Gradient Estimation Technique}

\institute{Sean Breckling \at
Nevada National Security Site
\email{brecklsr@nv.doe.gov}           
\and
Leora E. Dresselhaus-Marais \at
Stanford University
\email{leoradm@stanford.edu}
\and Bernard Kozioziemski \at
Lawrence Livermore National Laboratory, \email{kozioziemski1@llnl.gov}}

\date{Received: date / Accepted: date}

\maketitle

\begin{abstract}
High-brightness X-ray pulses, as generated at synchrotrons and X-ray free electron lasers (XFEL), are used in a variety of scientific experiments. Many experimental testbeds require optical equipment, e.g Compound Refractive Lenses (CRLs), to be precisely aligned and focused. The lateral alignment of CRLs to a beamline requires precise positioning along four axes: two translational, and the two rotational. At a synchrotron, alignment is often accomplished manually. However, XFEL beamlines present a beam brightness that fluctuates in time, making manual alignment a time-consuming endeavor. Automation using classic stochastic methods often fail, given the errant gradient estimates. We present an online correction based on the combination of a generalized finite difference stencil and a time-dependent sampling pattern. Error expectation is analyzed, and efficacy is demonstrated. We provide a proof of concept by laterally aligning optics on a simulated XFEL beamline, generated from data collected at Pohang Accelerator Laboratory XFEL, and the Advanced Photon Source at Argonne National Laboratory.
\keywords{Online Stochastic Optimization \and Stochastic Gradient Descent \and X-ray Free Electron Laser \and Compound Refractive Lens}
\end{abstract}
\section{Introduction}
\label{sec1}
Compound refractive lens assemblies (CRLs) are frequently used as objective lenses in X-ray microscopes \cite{lengeler99}, or as upstream condensers \cite{schroer05,vaughan11}. Both applications are extremely sensitive to lateral misalignment. The full focusing procedure of a CRL requires five degrees of freedom. Independent translations along the $x,$ and $y$ axes, in addition to two rotations $r_x$ and $r_y$ about those axes, produce a lateral alignment. The fifth degree of freedom is a translation along the $z$ axis, which locates the classical position of the focus (see Figure~\ref{hugh_image}). 

Our principal motivation for this work is the task of laterally aligning compound refractive lens assemblies (CRLs) at  X-ray free electron laser facilities (XFELs). XFELs are a new class of X-ray sources that produce the shortest duration and brightest X-ray pulses currently attainable, paving the way for experiments that were previously not possible \cite{Yabashi2017}. A new approach to the alignment of CRLs is necessary, in large part due to the novel amplification process to generate X-ray pulses. At XFEL facilities,  Self-Amplification of Spontaneous Emission (SASE) causes the beam position, spatial mode, propagation direction (pointing), and intensity to fluctuate stochastically \cite{Emma2010,Schneidmiller2016}. The proper lateral alignment of focusing optics is crucial to produce the highest resolution and smallest focal spots, as required for many modern X-ray experiments.

\begin{figure}[h]
    \centering
    \includegraphics[width=5cm]{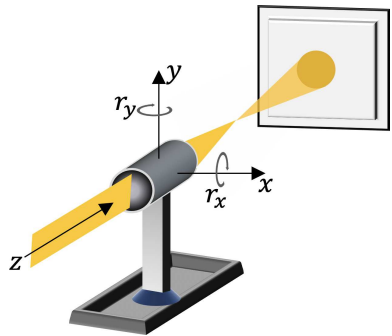}
    \caption{This example diagram depicts a CRL-based imaging configuration along an optical axis $z$. A CRL is aligned to the optical axis by four independent motors. Two control translations along the perpendicular $x$ and $y$ axes, and two control rotations about the $x$ and $y$ axes.}
    \label{hugh_image}
\end{figure}

The orientation of the CRL is controlled by four motorized stages: two that translate the optic along the $x$ and $y$ axes, and two that rotate about them. To align this optic, a detector is placed behind the exit surface of the CRL to measure the transmitted X-ray beam. Typically, these sensors are either charge-coupled device (CCD) cameras or different implementations of a photodiode (e.g. ion chamber). The beam-line scientist's task is to laterally align the CRL by maximizing X-ray transmitted light that reaches the sensor as a function of the four positions. 

In more formal language, let $f\colon \mathbb{R}^4 \to \mathbb{R}$ denote an idealized model of a noiseless, steady (in space and amplitude) X-ray transmission through a CRL as a function of the orientation. In \cite{simons17}, Simons et al. demonstrated that for a convex region $\Omega \subset \mathbb{R}^4$, that transmission $f$ can be modelled with a Gaussian distribution. In this idealized case, the alignment procedure reduces to the trivially convex optimization problem
\begin{equation*}
\min_{\bx \in \Omega} -f(\bx).
\end{equation*}

Given the simplicity of the problem, it is common at scientific beam-lines with more-stable amplitudes to establish lateral alignment through a simple manual process. A recent study automated this task by using a modified stochastic simplex method on the lateral alignment of CRL assemblies at synchrotron facilities \cite{Breckling2021}. In lieu of automating, the usual approach is to perform a rough initial alignment, then select two of the four dimensions of $\Omega$ and perform a raster scan of the transmission, logging a detector's response at each particular orientation. The ``best" position from that 2D scan is selected, and the micro controllers are driven to that position. The alternate dimensions are then selected, and the procedure repeats until alignment is satisfactory. 

For many scientific beam-lines, this dead-reckoning approach is sufficient. Unfortunately, the SASE process for generating X-rays at XFEL facilities introduces an unpredictable time-dependent intensity drift, as well as stochastic perturbations of the beam's propagation axes. This, of course, is in addition to the usual sources of measurement noise. These complications prevent the reliable success of a direct implementation of the simplex-based approach seen in \cite{Breckling2021}. Given that it is only possible to record X-ray transmission for a single orientation at a single moment in time, an orderly raster-like scan of the transmission at an XFEL facility is not likely to see a distribution that strongly agrees with the Gaussian model developed in \cite{simons17}. As a result, it remains the common practice to rely heavily on the intuition of the beamline scientist to interpret such scans, substantially extending the time required to produce an acceptable initial lateral alignment, and realignment. Given that time is an extremely limited resource at XFELs, an alternative technique to quickly and reliably expedite this procedure is sought. 

In this paper we propose a technique to estimate the gradient of the transmission function that accounts for both time-dependent amplitude fluctuations, and instrumentation noise. If successful, such a gradient could be utilized in a classic steepest descent algorithm. Given that stochastic descent-based approaches have been successful in automating similar optical alignment and focusing tasks, including the control of directed energy sources \cite{belen2007laboratory}, aligning line-of-sight communication arrays \cite{Raj2010}, and the alignment of two-mirror telescopes \cite{Li20}, we suspect that these corrections will allow for expedient and accurate alignments in our application. 

Let $t \in \mathbb{R}^+$ represent time,  $T\colon \mathbb{R}^+ \to \mathbb{R}$ be an arbitrarily smooth function which denotes the intensity of the beam over time, $\varsigma$ be the aggregate of all additive stochastic noise, and $\Theta$ denote stochastic perturbations to the beam's orientation. We then formulate our estimate of the transmission function as $G(\bx,t)=-T(t)f(\bx+\Theta) + \varsigma$. Our alignment procedure then looks like the optimization problem
\begin{equation}\label{exp_prob}
\min_{\bx \in \Omega} E\left[G(\bx,t)\right], \text{ for all } t \geq 0, 
\end{equation}
where $E$ is the expected value. While this problem does admit an optimal solution, common stochastic steepest-descent methods are not amenable to finding it without directly addressing the amplitude fluctuations \cite{Spall}. 

We propose an approach to this problem that substitutes the usual finite difference method with one which corrects for the non-steady amplitude. The method systematically intertwines the usual spatial samples for the gradient with additional samples from a fixed central location. We demonstrate through error asymptotics and numerical benchmarks that these additional samples can, when collected at sufficient rate, can sufficiently account for amplitude changes in intensity over time. Thus, an amplitude-corrected gradient, when paired with a standard stochastic descent algorithm, becomes well-suited for minimization problems like \eqref{exp_prob}.

The remainder of the paper is organized as follows: We formally introduce the amplitude-correcting scheme in Section~\ref{DiffScheme}, along with notation, and asymptotic error estimates. We provide two numerical benchmarks in Section~\ref{NumericalExperiments}. There, we first develop asymptotic error estimates for stochastic gradient descent (SGD) schemes using the amplitude-correction, along with a demonstration of the resulting convergence rates. We then demonstrate the efficacy of our amplitude-correcting gradient on a modified version of the Rosenbrock valley benchmark. Section~\ref{XFEL} outlines how our method shows promise in automating the lateral alignment of CRLs at X-ray experimental facilities. There, we provide a proof-of-concept implementation of our full optimization scheme against a synthetic cost function modelled to behave appreciably similar to one used at a genuine XFEL facility. Finally, we provide remarks in summary in Section~\ref{conclusions}.

\section{Constructing the Amplitude-Correcting Differencing Scheme} \label{DiffScheme}
Given a function $f\colon \mathbb{R}^n \to \mathbb{R}$, we are primarily concerned with computing estimates of $\nabla f$. To this end, we assume a high degree of smoothness, i.e., $f$ is sufficiently G\^{a}teux and Fr\'{e}chet differentiable to satisfy the necessary conditions of our estimates to follow. The gradient of a function at a particular point $\bx_c \in \mathbb{R}^n$ is typically estimated by sampling that function $\mathcal{O}(N)$ $(N \in \mathbb{N}, N > n)$ times in a local region around $\bx_c$. We consider an $n-$dimensional ball of radius $\delta>0$ centered at $\bx_c$, denoted $\mathcal{B}_\delta(\bx_c)$, and define $\Omega$ to be an open, connected, bounded set containing $\mathcal{B}_\delta(\bx_c)$ within the interior. 

For our application, $n=4$, given the degrees of freedom for lateral alignment. Additionally, the function $f$ can only be evaluated at one particular position $\bx \in \mathbb{R}^n$ at a time. Sampling another position $\bx' \in \mathbb{R}^n$ requires a discrete amount of time $h>0$ to elapse. Given a particular starting time $t_0 > 0$, we denote the interval of time required to compute a gradient using our technique defined below to be 
\begin{equation*}
    \mathcal{T}_{h,N} =: [t_0,t_0 + (4N + 1)h].
\end{equation*}
Though in the interest of brevity, we may refer to $\mathcal{T}_{h,N}$ as simply $\mathcal{T}$. Finally, we assume that our amplitude function $T\colon \mathbb{R}^+ \to \mathbb{R}$ is at least four-times differentiable, i.e., $T \in \mathcal{C}^4(\mathcal{\mathcal{T})}.$

Let $E$ and $V$ denote the expectation and variance of a time series over $\mathcal{T}$. Our source of additive noise is assumed to be normal and i.i.d. such that $E(\varsigma)$ = 0 and $V(\varsigma) = \sigma^2$. Our smooth and additive noise-corrupted functions are written:
\begin{eqnarray*}
	F(\bx,t) &=& T(t)f(\bx), \\
	G(\bx,t) &=& F(\bx,t) + \varsigma(t).
\end{eqnarray*}

To organize our scheme, we arrange our sample indices serially in terms of the position in $\mathcal{B}_\delta(\bx_c)$ and time $t_k \in \mathcal{T}$. Let $\be$ be an arbitrary unit vector in $\mathbb{R}^n$. For the noise-free case, we write:
\begin{center}
	\begin{tabular}{rclcl}
		$F(\bx_c,t_k)$                       & = & $F_k^c$                 & = & $T_k f^c$, \\
		$F(\bx_c \pm \delta \be, t_k \pm h)$ & = & $F_{k\pm1}^{\be^{\pm}} $& = & $T_{k\pm1} f^{\be^{\pm}}$.
	\end{tabular}
\end{center}
\noindent Similarly, our noise corrupted case is written:
\begin{center}
	\begin{tabular}{rclcl}
		$G(\bx_c,t_k)$                       & = & $G_k^c$                 & = & $F_k^c + \varsigma_k$, \\
		$G(\bx_c \pm \delta \be, t_k \pm h)$ & = & $G_{k\pm1}^{\be^{\pm}} $& = & $F_{k\pm1}^{\be^{\pm}} + \varsigma_{k\pm1}$,
	\end{tabular}
\end{center}
where $\be$ is a unit vector in the selected direction.

We use the over-bar shorthand to denote time-averaged terms, e.g.,
\begin{equation*}
\bar{F}_k^c = \frac{F_{k-1}^c + F_{k+1}^c}{2}.
\end{equation*}
We make use of the usual norm notation, i.e., $|| \cdot ||_2$ denotes an $L^2$ norm; though the subscript is dropped in the context of Euclidean vectors. When discussing discretized approximations to the usual gradient operator $\nabla$, we use $\nabla_\delta$ to denote the uncorrected differencing scheme provided in Definition \ref{basic_grad}, and $\nabla_{\delta,h}$ for the amplitude-correcting gradient estimate developed further below. Directional derivative operators and their approximations are then written as $(\be \cdot \nabla)$, $(\be \cdot \nabla_\delta)$, and $(\be \cdot \nabla_{\delta,h})$ respectively. 

\begin{definition}[A Linear Regression-Based Gradient Estimate]\label{basic_grad} Let $\delta > 0$, and $\Omega \subset \mathbb{R}^n$ contain the open ball $\mathcal{B}_{\delta}(\bx_c)$. Further, let the points $\lbrace \bx_i \rbrace_{i=1}^{N}$ be a collection of $N$ unique points on the surface of the ball $\mathcal{B}_{\delta}(\bx_c)$ such that $N>n$. For a given function $f :\Omega \rightarrow \mathbb{R}$, sample each point on the ball, collecting each sample in the vector ${\bf F} = \lbrace f_i \rbrace_{i=1}^{N}$.  Use the matrix $\bX = \lbrace 1, \bx_i \rbrace_{i=1}^{N}$ and corresponding samples ${\bf F}$ to assemble the linear regression problem 
\begin{equation*}
     \bfeta  = (\bX^T \bX)^{-1} \bX^T {\bf F}.
\end{equation*}
The solution $\bfeta = \lbrace \eta_i \rbrace_{i=1}^{n+1}$ determines the gradient estimate
\begin{equation*}
    \nabla f(\bx_c) \approx \nabla_\delta f(\bx_c) =: \lbrace \eta_i \rbrace_{i=2}^{n+1}.
\end{equation*}
\end{definition}

\subsection{The Differencing Scheme}
\label{sec:diffscheme}
The definition below is assembled similarly to that seen in Definition \ref{basic_grad}, but coordinates all sampling according to a uniformly-discretized time series. A example diagram is provided in Figure \ref{ball}. If the sampling distance $\delta > 0$ remains uniform, it is assumed that the time required to visit each point within the sequence is uniform. While this isn't a necessary limitation in practice, this assumption simplifies the analysis provded in \ref{AccuracyEst}.

\begin{figure} 
	\begin{center}
		{\bf An Example 6-Point Stencil in 2D} \\
		\begin{tikzpicture}[fill = white]
		\draw[blue!20,thick,dashed] (0,0) circle (2cm);
		\path (0,0)       node(a) [circle, draw, fill] {$\bx_c$}
		(2.0,0.0)   node(b) [circle, draw, fill] {$\bx_1$}
		(1, 1.7320) node(c) [circle, draw, fill] {$\bx_3$}
		(-1,1.7320) node(d) [circle, draw, fill] {$\bx_5$}
		(-2.0,0)    node(e) [circle, draw, fill] {$\bx_2$}
		(-1,-1.7320)node(f) [circle, draw, fill] {$\bx_4$}
		( 1,-1.7320)node(g) [circle, draw, fill] {$\bx_6$};
		
		\draw[blue!100,thick,-{Straight Barb[left]}]     (node cs:name=a, angle=8    )  --  (node cs:name=b, angle=180-8);
		\draw[blue!100,  thick,-{Straight Barb[left]}]     (node cs:name=b, angle=180+8)  --  (node cs:name=a, angle=-8   ); 
		
		\draw[purple!100,thick,-{Straight Barb[left]}]     (node cs:name=a, angle=60+8    )  --  (node cs:name=c, angle=180+60-8);
		\draw[purple!100,  thick,-{Straight Barb[left]}]     (node cs:name=c, angle=180+60+8)  --  (node cs:name=a, angle=60-8   ); 			
		
		\draw[green!100,thick,-{Straight Barb[left]}]     (node cs:name=a, angle=120+8    )  --  (node cs:name=d, angle=180+120-8);
		\draw[green!100,  thick,-{Straight Barb[left]}]     (node cs:name=d, angle=180+120+8)  --  (node cs:name=a, angle=120-8   ); 	
		
		\draw[blue!100,thick,-{Straight Barb[left]}]     (node cs:name=a, angle=180+8    )  --  (node cs:name=e, angle=180+180-8);
		\draw[blue!100,  thick,-{Straight Barb[left]}]     (node cs:name=e, angle=180+180+8)  --  (node cs:name=a, angle=180-8   ); 			
		
		\draw[purple!100,thick,-{Straight Barb[left]}]     (node cs:name=a, angle=240+8    )  --  (node cs:name=f, angle=180+240-8);
		\draw[purple!100,  thick,-{Straight Barb[left]}]     (node cs:name=f, angle=180+240+8)  --  (node cs:name=a, angle=240-8   ); 			
		
		\draw[green!100,thick,-{Straight Barb[left]}]     (node cs:name=a, angle=300+8    )  --  (node cs:name=g, angle=180+300-8);
		\draw[green!100,  thick,-{Straight Barb[left]}]     (node cs:name=g, angle=180+300+8)  --  (node cs:name=a, angle=300-8   ); 			
		\end{tikzpicture}
	\end{center}
	\caption{\label{ball} This depicts an example six-point ($N=3$) sampling stencil  for a two-dimensional search space. The procedure requires a total of 13 samples. Begin by sampling at $\bx_c.$ Next, sample at $\bx_1$, then return and sample $\bx_c$. Repeat this process sequentially for the remaining $\bx_i$. }
\end{figure}
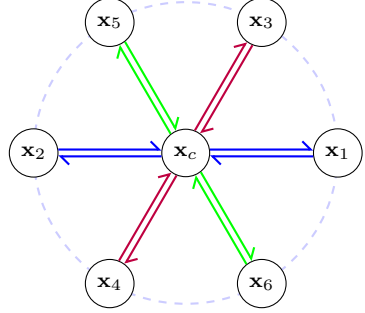

\begin{definition}[Amplitude-Correcting Gradient Estimate]\label{def_grad} Let $\delta, h > 0,$ and $\Omega \subset \mathbb{R}^n$ contain the open ball $\mathcal{B}_{\delta}(\bx_c)$. Let the points $\lbrace \bx_i \rbrace_{i=1}^{N}$ be a collection of $N$ unique points on the surface of the ball $\mathcal{B}_{\delta}(\bx_c)$, along with the $N$ corresponding antipodal points $\lbrace \bx_i' \rbrace_{i=1}^N$, such that $N>n$. Let $\mathcal{T}_{h,N}$ be the uniform discretization of the time interval $\mathcal{T}.$ For the function $f:\Omega \rightarrow \mathbb{R}$, the amplitude function $T:\mathcal{T} \rightarrow \mathbb{R}$, and additive noise $\varsigma: \mathcal{T} \rightarrow \mathbb{R}$, we write the given function $G : \Omega \times \mathcal{T} \rightarrow \mathbb{R}$ such that $G(\bx,t) = T(t)f(\bx) + \varsigma(t)$. Let $\mu_T = E\left[T\left(\mathcal{T}_{h,N} \right) \right]$, and $\be_k^+$ be the unit vector in the direction $\bx_k - \bx_c.$ Index our uniformly discretized time-steps as $k=1,\ldots,4N+1$. Sample $\bx_c$ at odd values of $k,$  i.e.  $k = 2k'-1$, collecting each sample as $G_{k}^c.$ Time-average these values gives $\bar{G}_{2k'}^c$.  For even values of $k,$ i.e. $k= 2k'$, alternate sampling $\bx_{k'}$ and its antipodal counterpart $\bx_{k'}'$, collecting each sample as $G_{k}^{\be_{k}^+}.$  We organize the matrix $\bX$ such that
\begin{equation*}
    \bX = \begin{bmatrix}
   1 & \bx_1  \\
   1 & \bx_1' \\
   \vdots & \vdots \\
   1 & \bx_N \\
   1 & \bx_N'
\end{bmatrix}, 
\end{equation*}
the sample matrix ${\bf G}$ such that
\begin{equation*}
    {\bf G} = \frac{1}{\mu_T}\lbrace G_{2i}^{\be_{2i}^+} - \bar{G}_{2i}^c , \rbrace_{i=1}^{2N}
\end{equation*}and the linear regression problem 
\begin{equation*}
    \bfeta = (\bX^T \bX)^{-1} \bX^T {\bf G}.
\end{equation*}
The solution $\bfeta = \lbrace \eta_i \rbrace_{i=1}^{n+1}$ determines our gradient estimate
\begin{equation*}
    \nabla f(\bx_c) \approx \nabla_{\delta,h}G(\bx_c, \mathcal{T}) =: \lbrace \eta_i \rbrace_{i=2}^{n+1}.
\end{equation*}
\end{definition}

If we use the following short-hand for the standard central-differencing stencil (in spatial coordinates), directional derivatives can be written 
\begin{equation*}
\left(\be_k \cdot \nabla_{\delta}\right)f(\bx_c) = \frac{1}{2\delta} \left(f^{\be_k^+} - f^{\be_k^-}\right).
\end{equation*}
Our convention of selecting antipodal points in sequence allows us to utilize these directional derivative stencils directly. Since each observation of $G(\bx,t)$ results in an independent noise term $\varsigma$, combining like-terms results in
\begin{align*}
	\left(\be_k \cdot \nabla_{\delta,h}\right)G(\bx_c,\mathcal{T})  =  \frac{1}{2\mu_T\delta} & \left[\left(G_{2k}^{\be_k^+} - \bar{G}_{2k}^{c} \right) - \left(G_{2k+2}^{\be_k^-} - \bar{G}_{2k+2}^{c}\right)\right] \\
	 =  \frac{1}{2\mu_T\delta} & \Big[\left(F_{2k}^{\be_k^+} - \bar{F}_{2k}^{c} \right) - \left(F_{2k+2}^{\be_k^-} - \bar{F}_{2k+2}^{c}\right) \\
	 & + \frac{\varsigma_{1,k}}{2} + \varsigma_{2,k} + \varsigma_{3,k} + \frac{\varsigma_{4,k}}{2}\Big].
\end{align*}

\begin{thm}[Error Estimate on Noise-Free Functions]\label{NoiseFreeThm} Let $F(\bx,t) = T(t)f(\bx)$ where $T$ and $f$ are at least $\mathcal{C}^4(\mathcal{T})$ and $\mathcal{C}^3(\Omega)$ respectively. We sample $N$ antipodal pairs such that the resulting sampling is unbiased, and quasi-uniform. For $\bx_c \in \Omega$, and $\delta > 0$ such that $\mathcal{B}_\delta(\bx_c)$ is in the interior of $\Omega$, we let $\be_k$ be the unit vector associated with the $k^{th}$ antipodal pair of points. Further, we let $\mu_T$ be the known expectation of $T(t)$ over $\mathcal{T}$. Selecting $h$ such that $h^3 < \delta$ guarantees that there exists a constant $C^*(\delta, h, N, T,T',T^{(4)},f, \nabla f)  > 0$ such that, 
	\begin{equation*}
	\left|\left|  \nabla f(\bx_c) - \nabla_{\delta,h} F(\bx_c,\mathcal{T})\right|\right| \leq C^* \left(h + \delta^2 \right).
	\end{equation*}
\end{thm}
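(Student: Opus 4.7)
My plan is to bound the full gradient error by reducing to the directional-derivative error along each antipodal direction, and then executing matched Taylor expansions of $f$ in space and $T$ in time. After translating so that $\bx_c = 0$, antipodal symmetry forces the normal matrix $\bX^T\bX$ into block-diagonal form: a scalar $2N$ in the intercept slot and $2\delta^{2}\sum_i \be_i \be_i^T$ in the slope block. The quasi-uniform, unbiased placement hypothesis gives $\sum_i \be_i \be_i^T \approx (N/n) I_n$, making the normal system well-conditioned with condition number depending only on $n$ and $N$. Consequently the slope portion of $\bfeta$ is a bounded linear combination of the antipodal central-difference estimates
\begin{equation*}
(\be_k \cdot \nabla_{\delta,h}) F(\bx_c,\mathcal{T}) = \frac{1}{2 \mu_T \delta}\bigl[(F_{2k}^{\be_k^+} - \bar F_{2k}^c) - (F_{2k+2}^{\be_k^-} - \bar F_{2k+2}^c)\bigr],
\end{equation*}
so the task reduces to bounding $\bigl|(\be_k \cdot \nabla) f(\bx_c) - (\be_k \cdot \nabla_{\delta,h}) F\bigr|$ uniformly in $k$.

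To bound each directional error I would decompose $F_{2k}^{\be_k^+} - \bar F_{2k}^c = T_{2k}\bigl[f(\bx_c + \delta \be_k) - f(\bx_c)\bigr] + (T_{2k} - \bar T_{2k}) f(\bx_c)$ and similarly for the antipodal sample. A fourth-order Taylor expansion of $T$ about $t_{2k+1}$, which the $\mathcal{C}^4$ hypothesis supports, produces $T_{2k} - \bar T_{2k} = -\tfrac{h^{2}}{2} T''(t_{2k}) + O(h^{4})$, and an analogous formula at $t_{2k+2}$. A third-order spatial expansion of $f$ gives $f(\bx_c \pm \delta \be_k) - f(\bx_c) = \pm \delta (\be_k \cdot \nabla) f(\bx_c) + \tfrac{\delta^{2}}{2} (\be_k \cdot \nabla)^{2} f(\bx_c) + O(\delta^{3})$. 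Substituting and collecting, the leading term is $\tfrac{T_{2k+1}}{\mu_T}(\be_k \cdot \nabla) f(\bx_c)$, while the residuals split cleanly into (i) an $O(\delta^{2})$ spatial remainder, (ii) an $O(h\delta)$ cross term arising from $T_{2k}-T_{2k+2}=O(h)$ multiplying the symmetric spatial quadratic piece, and (iii) an $O(h^{3}/\delta)$ term arising from $T''(t_{2k}) - T''(t_{2k+2}) = O(h)$ propagated through the two centered-average corrections and then divided by the $2\delta$ normalization.

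To finish I would replace the prefactor $T_{2k+1}/\mu_T$ by $1$. Since $\mu_T = E[T]$ over an interval of length $(4N+1)h$ and $T \in \mathcal{C}^1$, the mean-value inequality gives $|T_{2k+1} - \mu_T| \leq \|T'\|_\infty (4N+1)h/2$, so this prefactor contributes $O(h)\,|\nabla f(\bx_c)|$ with $N$ absorbed into the final constant. Combining (i)--(iii) with the prefactor correction and invoking the hypothesis $h^{3} < \delta$ to absorb the $O(h^{3}/\delta)$ piece into the $O(h)$ envelope, one obtains the directional bound $\bigl|(\be_k \cdot \nabla) f(\bx_c) - (\be_k \cdot \nabla_{\delta,h}) F\bigr| \leq C(h + \delta^{2})$. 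Lifting back through the bounded inverse of the normal matrix yields the full-gradient inequality, with $C^{*}$ depending on $N$, $\|T\|_{\mathcal{C}^{4}(\mathcal{T})}$, $\|f\|_{\mathcal{C}^{3}}$, and $\|\nabla f\|$ near $\bx_c$.

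The hardest step is wrangling the $O(h^{3}/\delta)$ contribution: the centered-average correction $\bar F^{c}$ only removes the local, single-instant amplitude artifact, and the leftover curvature mismatch $T''(t_{2k}) - T''(t_{2k+2}) = O(h)$ is inflated by the $1/\delta$ derivative normalization, which is precisely what forces the scaling $h^{3} < \delta$. A secondary subtlety is verifying that antipodal, quasi-uniform sampling genuinely decouples the slope columns of $\bfeta$ from the intercept column; without this decoupling, spatially uniform temporal noise from the $T_{2k+1}/\mu_T$ prefactor would leak into the gradient estimate rather than being absorbed into the intercept.
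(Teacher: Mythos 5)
Your overall architecture is the same as the paper's: split off the $\mathcal{O}(\delta^2)$ regression/central-difference error, reduce the remaining error to the per-direction stencils $(\be_k\cdot\nabla_{\delta,h})F$, and then Taylor-expand, isolating the same three amplitude-induced residuals the paper identifies --- the $(T_{2k}-T_{2k+2})$ cross term, the double time-average correction, and the replacement of the local amplitude by $\mu_T$ (which is where the factor $4N+1$ enters). Your explicit verification that antipodal, quasi-uniform sampling block-diagonalizes $\bX^T\bX$ and decouples the intercept column from the slope columns is a welcome addition; the paper simply writes the $i$-th component as $\sum_k \frac{\hat{\be}_i^T\cdot\be_k}{2\mu_T N\delta}[\cdots]$ and appeals to the triangle inequality, which tacitly assumes exactly the isotropy you spell out.

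The one step that does not follow as written is the disposal of the time-average residual. You correctly compute that $(T_{2k}-\bar{T}_{2k})-(T_{2k+2}-\bar{T}_{2k+2})$ has leading term $-\tfrac{h^2}{2}\bigl[T''(t_{2k})-T''(t_{2k+2})\bigr]=h^3T'''(t_{2k+1})+\mathcal{O}(h^5)$, so after the $1/(2\mu_T\delta)$ normalization this residual is $\mathcal{O}(h^3/\delta)$. But the hypothesis $h^3<\delta$ only gives $h^3/\delta<1$; it does not put this term inside an $\mathcal{O}(h)$ envelope with an $h$- and $\delta$-free constant. The patch is to use the fact that the theorem permits $C^*$ to depend on $\delta$: from $h<\delta^{1/3}$ one gets $h^3/\delta\le\delta^{-1/3}h$, so the term is $\mathcal{O}(h)$ with a constant that degrades as $\delta\to0$ --- and you should then add $\delta$ to your closing list of dependencies for $C^*$, which currently omits it. For comparison, the paper's Lemma bounds this same residual by $\frac{h^4}{\mu_T\delta}\,|T^{(4)}(\xi_2)|\,|f^c|$, for which $h^3<\delta$ cleanly yields $\mathcal{O}(h)$; but a direct expansion about $t_{2k+1}$ of the equivalent expression $T_{2k}-T_{2k+2}+\tfrac12(T_{2k+3}-T_{2k-1})$ gives $h^3T'''(t_{2k+1})+\mathcal{O}(h^5)$, i.e.\ your $h^3T'''$ accounting appears to be the correct order and the paper's $h^4T^{(4)}$ bound misses the surviving third-order term. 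So do not wave this term into $\mathcal{O}(h)$ by citing $h^3<\delta$ alone: either strengthen the hypothesis to $h^2\lesssim\delta$ or state explicitly that the constant absorbs a factor of $\delta^{-1/3}$.
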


\noindent A similar result is provided for the case when additive i.i.d. noise is present. 

\begin{thm}[Error Estimate on Noisy Functions]\label{NoisyThm} Let $G(\bx,t) = F(\bx,t) + \varsigma(t)$. Under the same assumptions as Theorem \ref{NoiseFreeThm}, the total contribution of error from stochastic sources can be written
	\begin{eqnarray*}
		{\varepsilon} &:=& \nabla_{\delta,h} G(\bx_c, \mathcal{T}) - \nabla_{\delta,h} F(\bx_c, \mathcal{T}) \\
		& =& \left\lbrace \sum_{k=1}^N \frac{\hat{\be}_i^T \cdot \be_k}{2 \mu_T N \delta} \left[\frac{\varsigma_{1,k}}{2} + \varsigma_{2,k} + \varsigma_{3,k} + \frac{\varsigma_{4,k}}{2}\right]\right\rbrace_{i=1}^n.
	\end{eqnarray*} 
	Then it follows that
	\begin{equation*}
	    E\left[||\varepsilon||\right] \leq 4 \frac{\sigma}{\mu_T \delta} \sqrt{\frac{n}{N}},
	\end{equation*}
	and for $p \in (0,1)$, the probability 
	\begin{equation*}
	    \mathcal{P} \left[ ||\varepsilon|| \leq 4 \frac{\sigma}{\mu_T \delta} + 2 \frac{\sigma}{\mu_T \delta}\sqrt{\frac{\log(1/p)}{N}}\right] \geq 1-p.
	\end{equation*}
\end{thm}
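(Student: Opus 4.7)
The plan is to first derive the displayed closed form for $\varepsilon$ from the linear regression, then extract the two inequalities via standard Gaussian moment and concentration estimates. Since $G = F + \varsigma$ and the estimator $\nabla_{\delta,h}$ is linear in the samples, $\varepsilon$ equals $\nabla_{\delta,h}$ applied to the pure noise sequence. Recentering $\bX$ at $\bx_{c}$, the antipodal pairing makes $\bX^{T}\bX$ block-diagonal to leading order with bottom-right $n\times n$ block proportional to $\sum_{k} \be_{k}\be_{k}^{T}$; combining the corresponding block inverse with the directional-derivative stencil already displayed in Section~\ref{sec:diffscheme} --- which writes each along-$\be_{k}$ noise contribution as $(2\mu_{T}\delta)^{-1}[\varsigma_{1,k}/2 + \varsigma_{2,k} + \varsigma_{3,k} + \varsigma_{4,k}/2]$ --- then yields the stated component-wise formula for $\varepsilon_{i}$.

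For the expected-norm bound, set $W_{k} := \varsigma_{1,k}/2 + \varsigma_{2,k} + \varsigma_{3,k} + \varsigma_{4,k}/2$. Under the i.i.d.\ noise assumption these are independent mean-zero Gaussians with $\mathrm{Var}(W_{k}) = \tfrac{5}{2}\sigma^{2}$, and each $\varepsilon_{i}$ is a linear combination of the $W_{k}$. Computing variances and bounding $|\hat{\be}_{i}^{T}\be_{k}| \le 1$ gives $\mathrm{Var}(\varepsilon_{i}) \le \tfrac{5\sigma^{2}}{8\mu_{T}^{2} N \delta^{2}}$, so $E[\|\varepsilon\|^{2}] \le \tfrac{5 n \sigma^{2}}{8 \mu_{T}^{2} N \delta^{2}}$. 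Jensen's inequality $E[\|\varepsilon\|] \le \sqrt{E[\|\varepsilon\|^{2}]}$ then delivers the claimed moment bound, with the stated constant $4$ absorbing the tighter $\sqrt{5/8}$.

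For the high-probability bound, $\varepsilon$ is a centered Gaussian vector in $\mathbb{R}^{n}$, so the map $Z \mapsto \|\Sigma^{1/2} Z\|$ is $\|\Sigma^{1/2}\|_{\mathrm{op}}$-Lipschitz in $Z \sim N(0, I)$, and Gaussian Lipschitz (Borell--TIS) concentration gives $\mathcal{P}(\|\varepsilon\| > E[\|\varepsilon\|] + t) \le \exp(-t^{2} / (2\|\Sigma\|_{\mathrm{op}}))$. Bounding $\|\Sigma\|_{\mathrm{op}} \le \tfrac{5\sigma^{2}}{8\mu_{T}^{2} N \delta^{2}}$ via $\|\sum_{k}\be_{k}\be_{k}^{T}\|_{\mathrm{op}} \le N$, setting the right-hand side equal to $p$ and solving for $t$ produces the $\tfrac{\sigma}{\mu_{T} \delta}\sqrt{\log(1/p)/N}$ deviation term, which combines with the mean bound to give the stated inequality. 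The main obstacle I anticipate is the bookkeeping in the closed-form derivation: both the block inversion of $\bX^{T}\bX$ under the quasi-uniform antipodal layout and the threading of the interlaced $(\tfrac{1}{2}, 1, 1, \tfrac{1}{2})$ coefficient pattern on the noise samples through the regression must be tracked carefully, so that the per-component formula for $\varepsilon_{i}$ emerges with the stated $1/(2\mu_{T} N \delta)$ prefactor. Once that formula is secured, both inequalities reduce to routine Gaussian moment and tail estimates.
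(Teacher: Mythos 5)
Your proposal is correct and follows the same overall route as the paper: express $\varepsilon$ as a centered Gaussian vector, control its covariance, and then invoke a Gaussian norm bound and a Gaussian tail bound. The differences are in the execution, and they are worth noting. The paper defines $\varsigma_k = \varsigma_{1,k}/2+\varsigma_{2,k}+\varsigma_{3,k}+\varsigma_{4,k}/2$, assigns it variance $3\sigma^2$ (your $\tfrac{5}{2}\sigma^2$ is the exact value; the paper's figure is a loose upper bound), computes the covariance $\Sigma = 3\sigma^2\mathbf{A}\mathbf{A}^T$ \emph{exactly} by using the quasi-uniformity of the antipodal directions to diagonalize $\sum_k \be_k\be_k^T$, arriving at $\|\Sigma\|_2 = 3\sigma^2/(4\mu_T^2\delta^2 N)$, and then cites a packaged result (Lemma~\ref{expect_lemma}, with the constants $4$ and $2$ built in) to produce both inequalities at once. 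You instead bound $\|\Sigma\|_{\mathrm{op}}$ by the trace inequality $\|\sum_k\be_k\be_k^T\|_{\mathrm{op}}\le N$, which does not require quasi-uniformity, and derive the two inequalities from first principles (Jensen on $E[\|\varepsilon\|^2]$ for the mean, Borell--TIS for the tail). Your constants are tighter than needed and comfortably absorbed by the stated $4$ and $2$, and your argument is marginally more self-contained and more robust to non-uniform stencils; the paper's version buys an exact covariance and a one-line finish via the cited lemma. The closed-form expression for $\varepsilon_i$ is, in both treatments, read off from the directional-derivative decomposition already displayed in Section~\ref{sec:diffscheme} rather than re-derived from the normal equations, so the bookkeeping you flag as the main obstacle is no heavier than what the paper itself assumes.
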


\section{Numerical Demonstrations} \label{NumericalExperiments}
Given that our motivation is to employ the amplitude-correcting gradient in steepest descent methods, our demonstrations will focus on that application. We begin by presenting two accelerated versions of the classic SGD algorithm, differing only by which gradient estimation technique utilized. A full discussion on proper choices for $\alpha$ and $\beta$ can be found in \cite{Nesterov}.

\begin{alg}[Accelerated SGD] \label{Alg2} Choose a suitable initial condition $\bx_0 \in \mathbb{R}^n$, step-size $\alpha_i > 0$, $\alpha_i \rightarrow 0$ as $i \rightarrow \infty$, and $\beta \in [0,1)$. Additionally, choose a radius $\delta > 0$ for the gradient estimator. Indexing our steps with $i=0,1,\ldots$ we proceed such that
	\begin{eqnarray*}
		\by_{i+1} &=& \beta \by_i - \nabla_{\delta}f(\bx_{i+1}), \\
		\bx_{i+1} &=& \bx_i - \alpha_i \by_{i+1}.
	\end{eqnarray*}
\end{alg}

\begin{alg}[Dynamic Amplitude-Corrected Accelerated SGD] \label{Alg1} Choose a suitable initial condition $\bx_0 \in \mathbb{R}^n$, step-size $\alpha_i > 0$, $\alpha_i \rightarrow 0$ as $i \rightarrow \infty$, and $\beta \in [0,1)$. Additionally, prescribe a spatial radius and time-step $\delta, h > 0$ for the gradient estimator. Indexing our steps with $i=0,1,\ldots$ we proceed such that
	\begin{eqnarray*}
		\by_{i+1} &=& \beta \by_i - \nabla_{\delta,h}G(\bx_{i+1},t), \\
		\bx_{i+1} &=& \bx_i - \alpha_i \by_{i+1}.
	\end{eqnarray*}
\end{alg}

In our first demonstration, we seek a direct comparison of the classic SGD algorithm with the amplitude-correcting version. In order for such a comparison to be salient, we consider two functions: Rosenbrock's valley with and without a time-varying amplitude. We then demonstrate, for well-selected parameters, that Algorithm \ref{Alg2}'s performance on the steady-amplitude function qualitatively matches Algorithm \ref{Alg1}'s performance on the non-steady version. When both simulations are successful against minimization problems that are otherwise formulated identically, we can conclude that the amplitude-corrections encoded into the online gradient estimate effectively overcome the variations.

In the second numerical experiment, we show that the error asymptotics provided in Theorems \ref{NoiseFreeThm} and \ref{NoisyThm} can be seen in SGD executions. We cite two theorems that respectively provide sufficient conditions for the convergence of Algorithm \ref{Alg2} with probability 1, and asymptotic error estimates. We then construct a noisy, time-varying function that otherwise adheres to those conditions, then prove that well-selected parameters guarantee Algorithm \ref{Alg1} also converges. This is numerically verified by isolating each source of error to see if the analytic rates match those encountered numerically. 

\subsection{A Quake in Rosenbrock's Valley}\
Rosenbrock's Valley \cite{Rosenbrock} is a polynomial on $\mathbb{R}^2$ defined as
\begin{equation}\label{rosenbrock}
f(x,y) = (1-x^2) + 100(y-x^2)^2.
\end{equation}
This polynomial has a global minimum value of $f(1,1) = 0$, and is locally convex around that point. However, the downward slope along the minimal ridge is quite low in the parabolic valley. It is this feature that made Rosenbrock's Valley a popular benchmark, since many steepest descent algorithms tend to reach the ridge quite quickly, but struggle to reach the optimal answer due to the oscillations spurred from the large values of $|\nabla f(x,y)|$ for $(x,y)$ not precisely on the ridge path. In the interest of clarity, we will refer to these as \emph{spatial oscillations.}

The classic benchmark nonlinear programming problem is typically presented as 
\begin{equation}\label{nlprog1}
\bx^* = \argmin_{\bx \in \mathbb{R}^2} f(x,y).
\end{equation}
We complicate matters by including the amplitude function $T(t)$ such that
\begin{equation}\label{nlprog2}
\bx^* = \argmin_{\bx \in \mathbb{R}^2} E\left[T(t)f(x,y)\right].
\end{equation}
where 
\begin{equation}\label{dynamic_amp_T1}
T(t) = 1 + \frac{3}{4} \cos{(2 \pi t)}.
\end{equation}
Again, for the sake of clarity, we shall refer to oscillations caused by a dynamic amplitudes like \eqref{dynamic_amp_T1} as \emph{temporal oscillations}.

In our first experiment, we attempt to solve our temporally oscillating problem  \eqref{nlprog2} with the standard gradient descent method (Algorithm \ref{Alg2}.)  We initialize at $\bx_0 = (-1.2,1)$, fix $\alpha_i = \delta =  1/500$, $\beta = 0$, enforce a step-size maximum $||\bx_{i+1} - \bx_{i}|| \leq 1 / 4$, and a maximum iteration count of $i_{\text{max}} = 1200$. The gradient is computed by a uniform sampling of $N=15$ antipodal pairs. In Figure \ref{rosenbrockfigs_noise}, we see that the gradient estimates are erroneous far beyond what can be tolerated by the standard algorithm. The figure only depicts steps up to $i_{200}$, since the full path eventually diverges. Increasing the momentum value $\beta$ has no appreciable impact on this outcome.

\begin{figure}
	\begin{center}
		\includegraphics[width=10cm]{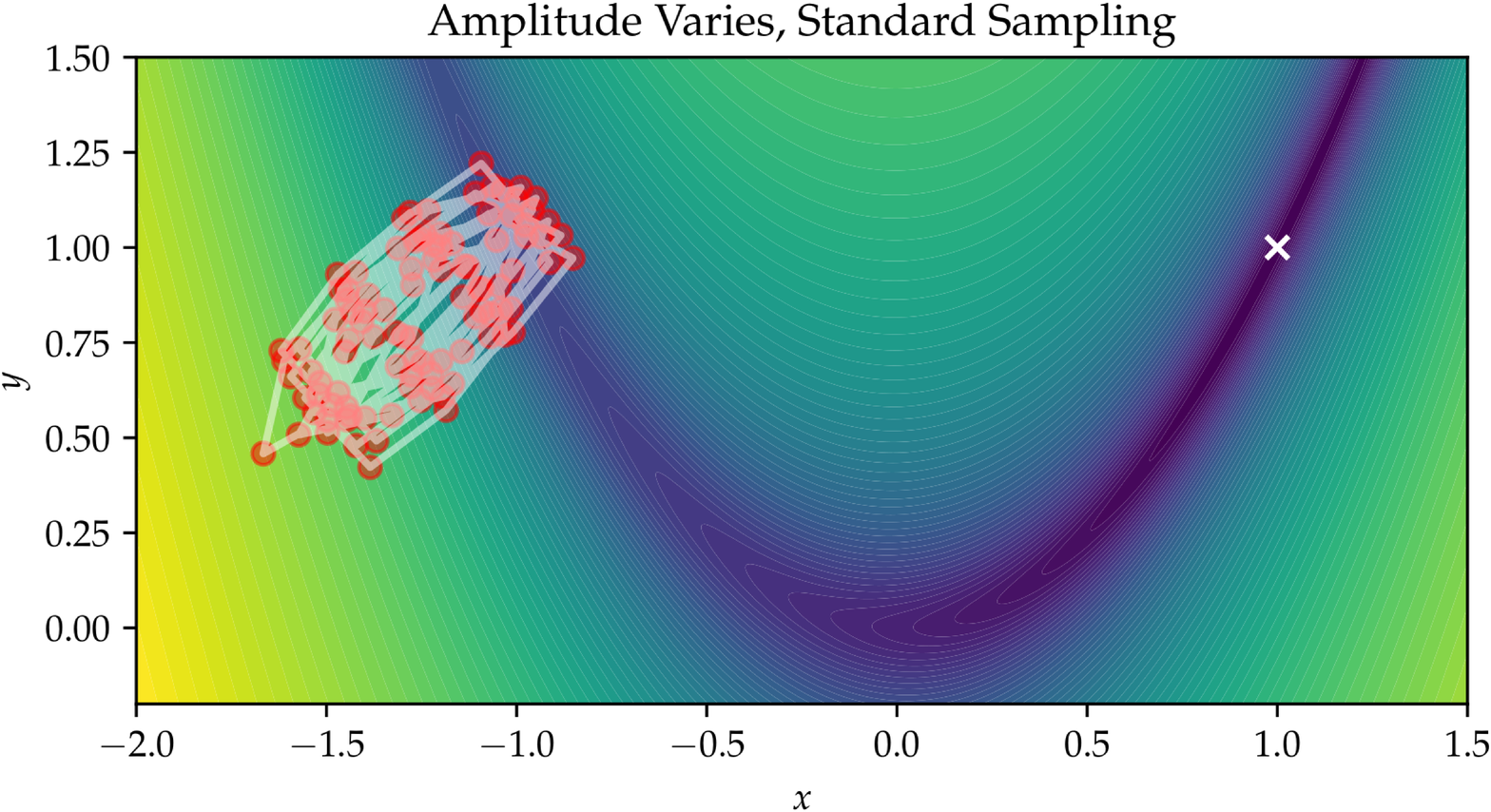} \\
		\hspace{0.5cm} 0 \ \includegraphics[width=6cm]{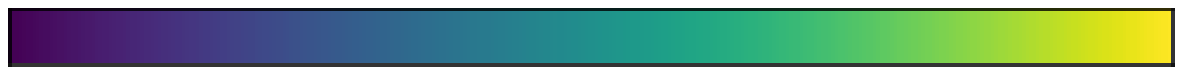} \ 1800
	\end{center}
	\caption{\label{rosenbrockfigs_noise}This figure demonstrates the failure of Algorithm \ref{Alg2} to solve the temporally-fluctuating problem \eqref{nlprog2}. Our plot only considers the first 200 steps, due to an eventual divergence. Each step is depicted by a red dot, connected in sequence by a white line. The white cross in each plot depicts the optimal solution at $(1,1)$. The spatial coordinates and color axis are all non-dimensionalized.}
\end{figure}

In the second experiment, we seek to demonstrate that our amplitude correcting gradient estimate is effective in overcoming the temporal oscillations imposed by \eqref{dynamic_amp_T1}. We accomplish this by comparing the performance of Algorithm \ref{Alg1}, which utilizes the dynamic amplitude correction, on the temporally-oscillating problem \eqref{nlprog2} to the performance of classic gradient descent method in Algorithm \ref{Alg2} on the non temporally-oscillating problem in \eqref{nlprog1}. For each execution we initialize at $\bx_0 = (-1.2,1)$, selecting $\alpha_i = \delta =  1/500$, $\beta = 0$, enforce a step-size maximum $||\bx_{i+1} - \bx_{i}|| \leq 1 / 4$, and a maximum iteration count of $i_{\text{max}} = 1200$. The gradient is computed by a uniform sampling of $N=15$ antipodal pairs. In the temporally oscillating problem, we prescribe a time-step of $h = 1/16$. We provide comparisons with, and without momentum in Figure~\ref{rosenbrockfig}.

In the first row of Figure \ref{rosenbrockfig} we see that without momentum ($\beta = 0$) neither implementation manages to overcome the spatial oscillations. By iteration count $i_{\text{max}} = 1200$, both executions seem to terminate in roughly the same position. In the second row, we see that when momentum is included, both methods overcome the spatial oscillations and reach the global minimum position. When considering the apparent qualitative similarity between these outcomes, in conjunction with the failure demonstrated in Figure \ref{rosenbrockfigs_noise}, we posit that the amplitude corrections are effective in mitigating the temporal oscillations imposed on \eqref{nlprog2}.

\begin{figure} 
	\begin{center}
		\includegraphics[width=10cm]{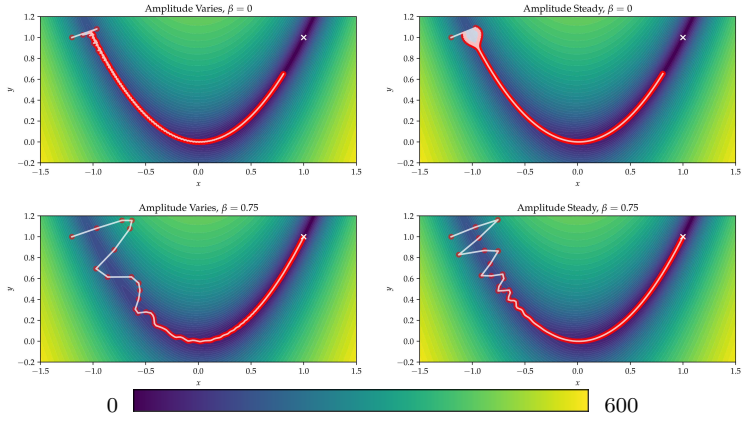}\\
		0 \ \includegraphics[width = 6cm]{colorbar.eps} \ 600
	\end{center}
	\caption{\label{rosenbrockfig} The top row compares the results from Algorithms \ref{Alg1} and \ref{Alg2} to problems \eqref{nlprog2} and \eqref{nlprog1} respectively, with no momentum term ($\beta = 0$). The bottom row makes the same comparison, but selects a momentum term $\beta = 0.75$. Each step is depicted by a red dot, connected in sequence by a white line. The white cross in each plot depicts the optimal solution at $(1,1)$. The spatial coordinates and color axis are again all non-dimensionalized.}
\end{figure}

\subsection{A Convergence Study}
The following theorem provides conditions sufficient for the convergence of the standard differencing gradient in SGD (Algorithm \ref{Alg2}), as well as an error estimate. Proof can be found in \cite{Nguyen2019}.

\begin{thm}[Convergence of SGD with Probability One]\label{sgd_converge} Under the following assumptions,
\begin{enumerate}
    \item[{\bf 1.)}] The objective function $f: \mathbb{R}^n \rightarrow \mathbb{R}$ is $\mu-$strongly convex, i.e., there exists $\mu > 0$ such that
    \begin{equation*}
        f(\bx) - f(\bx') \geq \nabla f(\bx')^T \cdot (\bx - \bx') + \frac{\mu}{2}||\bx - \bx'||^2.
    \end{equation*} 
    \item[{\bf 2.)}] For particular realizations of ${\bf \varsigma}$, the noise-corrupted objective function $\hat{f}(\bx) = f(\bx) + \varsigma$ is L-smooth, i.e., there exists an $L > 0$ such that for any $\bx',\bx \in \mathbb{R}^n$,
    \begin{equation*}
        ||\nabla_\delta \hat{f}(\bx) - \nabla_\delta \hat{f}(\bx')|| \leq L||\bx-\bx'||.
    \end{equation*}
    \item[{\bf 3.)}] The noise-corrupted cost function $\hat{f}$ is convex for every realization of $\varsigma$, i.e., for any $\bx, \bx' \in \mathbb{R}^n$
    \begin{equation*}
        \hat{f}(\bx) - \hat{f}(\bx') \geq \nabla_\delta \hat{f}(\bx')^T \cdot (\bx-\bx').
    \end{equation*}
\end{enumerate}
Then considering Algorithm \ref{Alg2} with step sizes 
\begin{equation*}
    0 < \alpha_i < \frac{1}{2L}, \ \sum_{i=0}^\infty \alpha_i = \infty \ \text{and} \ \sum_{i=0}^{\infty} \alpha_i^2 < \infty,
\end{equation*}
the following holds with probability 1 (almost surely)
\begin{equation*}
    ||\bx - \bx^* ||^2 \rightarrow 0,
\end{equation*}
where $\bx^* = \text{\emph{argmin}}_{\bx \in \mathbb{R}^n} f(\bx)$.
\end{thm}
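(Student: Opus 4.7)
The plan is to track the squared distance $a_i = \|\bx_i - \bx^*\|^2$ as a nonnegative almost-supermartingale and invoke the Robbins--Siegmund convergence theorem. Specializing to the $\beta = 0$ case of Algorithm \ref{Alg2}, the update reduces to $\bx_{i+1} = \bx_i - \alpha_i \nabla_\delta \hat f(\bx_i)$. Expanding yields
\begin{equation*}
a_{i+1} = a_i - 2\alpha_i \langle \nabla_\delta \hat f(\bx_i), \bx_i - \bx^*\rangle + \alpha_i^2 \|\nabla_\delta \hat f(\bx_i)\|^2,
\end{equation*}
and I would then take conditional expectation with respect to the filtration $\mathcal{F}_i$ generated by $\bx_0,\ldots,\bx_i$, assuming the estimator is unbiased in the sense $E[\nabla_\delta \hat f(\bx_i)\mid \mathcal{F}_i] = \nabla f(\bx_i)$.

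Strong convexity (Assumption 1) with $\nabla f(\bx^*) = \bo$ gives the linear-term bound $\langle \nabla f(\bx_i), \bx_i - \bx^*\rangle \geq (f(\bx_i) - f(\bx^*)) + (\mu/2) a_i$. For the quadratic term, I would exploit the co-coercivity identity that $L$-smoothness together with convexity of $\hat f$ (Assumptions 2--3) delivers, namely
\begin{equation*}
\|\nabla_\delta \hat f(\bx_i) - \nabla_\delta \hat f(\bx^*)\|^2 \leq 2L\bigl(\hat f(\bx_i) - \hat f(\bx^*) - \langle \nabla_\delta \hat f(\bx^*), \bx_i - \bx^*\rangle\bigr),
\end{equation*}
and combine it with the elementary inequality $\|u\|^2 \leq 2\|u-v\|^2 + 2\|v\|^2$ and the finite noise moment $\sigma_*^2 := E[\|\nabla_\delta \hat f(\bx^*)\|^2]$ to obtain the second-moment bound $E[\|\nabla_\delta \hat f(\bx_i)\|^2 \mid \mathcal{F}_i] \leq 4L(f(\bx_i) - f(\bx^*)) + 2\sigma_*^2$.

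Assembling these estimates and using $\alpha_i < 1/(2L)$ so that the $4L\alpha_i^2$ contribution is absorbed into the $-2\alpha_i$ gap term produces the master recursion
\begin{equation*}
E[a_{i+1} \mid \mathcal{F}_i] \leq a_i - \alpha_i \mu a_i - \alpha_i\bigl(f(\bx_i) - f(\bx^*)\bigr) + 2 \alpha_i^2 \sigma_*^2.
\end{equation*}
Since $\sum \alpha_i^2 < \infty$, the additive remainder is almost surely summable, so Robbins--Siegmund simultaneously gives (i) $a_i \to a_\infty$ almost surely for a finite random limit $a_\infty$, and (ii) $\sum_i \alpha_i \mu a_i < \infty$ almost surely. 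Together with the divergence $\sum_i \alpha_i = \infty$, conclusion (ii) forces $a_\infty = 0$, which is the claim.

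The main obstacle I expect is reconciling the unbiasedness assumption $E[\nabla_\delta \hat f(\bx) \mid \mathcal{F}] = \nabla f(\bx)$ with the regression stencil of Definition \ref{basic_grad}, because Theorems \ref{NoiseFreeThm}--\ref{NoisyThm} show a residual deterministic bias of order $\delta^2$ and a stochastic variance scaling like $\sigma^2/(N \delta^2)$. The cleanest remedy is to either let the stencil radius shrink as $\delta_i \to 0$ at a rate that enforces $\sum_i \alpha_i \delta_i^2 < \infty$, or to reinterpret $\bx^*$ as the minimizer of the smoothed surrogate that the regression estimator unbiasedly approximates and then verify that this surrogate inherits strong convexity and $L$-smoothness from Assumptions 1--3. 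Either route is essentially bookkeeping, but it is the step that honestly ties the cited abstract convergence result back to the estimator constructed in Section \ref{DiffScheme}.
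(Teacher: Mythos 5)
Your proposal is correct and follows essentially the same route as the paper's proof, which is not reproduced in the text but deferred to \cite{Nguyen2019}: expand $||\bx_{i+1}-\bx^*||^2$, bound the second moment by $4L\left(f(\bx_i)-f(\bx^*)\right)+2E\left[||\nabla_\delta \hat{f}(\bx^*)||^2\right]$ via the co-coercivity consequence of Assumptions 2--3, and close with a Robbins--Siegmund supermartingale argument under $\sum_i\alpha_i=\infty$ and $\sum_i\alpha_i^2<\infty$. Your closing caveat about the $\mathcal{O}(\delta^2)$ bias of the regression stencil (and your specialization to $\beta=0$) points at genuine looseness in how the paper invokes the cited result --- it simply writes the hypotheses in terms of $\nabla_\delta\hat{f}$ --- rather than at a defect in your argument.
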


The following result presents convergence of the stochastic gradient descent method in terms of the error seen in the gradient estimates of the cost function.  
\begin{cor}\label{sgd_conv_rate} Under the same assumptions of Theorem \ref{sgd_converge}, let $\mathcal{E} = \frac{4L}{\mu}$. Initialize Algorithm \ref{Alg2} with step size $\alpha_i = \frac{2}{\mu(t+\mathcal{E})} \leq \alpha_0 = \frac{1}{2L}.$ Then,
\begin{equation*}
    E\left[||\bx - \bx^* ||^2 \right] \leq \frac{16M}{\mu^2} \frac{1}{(t-\tau+\mathcal{E})},
\end{equation*}
for 
\begin{equation*}
    t \geq \tau = \frac{4L}{\mu} \text{\emph{max}}\left\lbrace \frac{L\mu}{M}||\bx_0 - \bx^*||^2,1 \right\rbrace - \frac{4L}{\mu},
\end{equation*}
where $M = 2E\left[||\nabla_\delta \hat{f}(\bx^*) ||^2\right]$ and $\bx^* = \text{\emph{argmin}}_{\bx \in \mathbb{R}^n} \hat{f}(\bx)$.
\end{cor}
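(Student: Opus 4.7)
The plan is to derive a one-step contractive recursion for $D_i := E[\|\bx_i - \bx^*\|^2]$ and then close the argument by an induction that exploits the specific schedule $\alpha_i = 2/[\mu(i+\mathcal{E})]$. Taking $\beta = 0$ in Algorithm \ref{Alg2} (as implicit in the $\mathcal{O}(1/i)$ rate quoted) and letting $\tilde{g}_i := \nabla_\delta \hat{f}(\bx_i)$ be unbiased for $\nabla f(\bx_i)$ conditional on $\bx_i$, squaring $\bx_{i+1} - \bx^* = (\bx_i - \bx^*) - \alpha_i \tilde{g}_i$ and taking total expectations yields
\[
D_{i+1} \;=\; D_i \;-\; 2\alpha_i\, E\bigl[\langle \nabla f(\bx_i),\, \bx_i - \bx^*\rangle\bigr] \;+\; \alpha_i^2\, E\bigl[\|\tilde{g}_i\|^2\bigr].
\]

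The first step is to bound each term. For the cross term, I would combine $\mu$-strong convexity of $f$ (Assumption 1) with the first-order condition $\nabla f(\bx^*) = 0$ to obtain $\langle \nabla f(\bx_i), \bx_i - \bx^*\rangle \geq \mu \|\bx_i - \bx^*\|^2$. For the second-moment term, I would split around the optimum via $\tilde{g}_i = (\tilde{g}_i - \nabla_\delta \hat{f}(\bx^*)) + \nabla_\delta \hat{f}(\bx^*)$, apply the elementary inequality $\|a+b\|^2 \leq 2\|a\|^2 + 2\|b\|^2$, and use $L$-smoothness of $\hat{f}$ (Assumption 2) on the first summand together with the definition $M = 2E[\|\nabla_\delta \hat{f}(\bx^*)\|^2]$ on the second to obtain $E[\|\tilde{g}_i\|^2] \leq 2L^2 D_i + M$. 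Per-realization convexity of $\hat{f}$ (Assumption 3) ensures that all sign conventions in the conditional-expectation step are preserved. Inserting these bounds and using $\alpha_i \leq 1/(2L)$ so that $2L^2 \alpha_i^2 \leq \mu\alpha_i$ reduces everything to the clean recursion
\[
D_{i+1} \;\leq\; (1 - \mu\alpha_i)\, D_i \;+\; \alpha_i^2 M.
\]

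Next, I would substitute $\alpha_i = 2/[\mu(i+\mathcal{E})]$ (reading the parameter $t$ in the statement as the iteration counter $i$) and prove the target $D_i \leq 16M/[\mu^2(i - \tau + \mathcal{E})]$ by induction on $i \geq \tau$. After clearing denominators and using $\mathcal{E} = 4L/\mu$, the inductive step collapses to a routine algebraic comparison in which the diffusion contribution $4M/[\mu^2(i+\mathcal{E})^2]$ is dominated by the contraction $2/(i+\mathcal{E}) \cdot 16M/[\mu^2(i-\tau+\mathcal{E})]$ that the strong-convexity term extracts from the hypothesis.

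The main obstacle, and the reason the burn-in shift $\tau$ appears at all, will be the base case: at $i = \tau$ the induction hypothesis requires $D_\tau \leq 16M/(\mu^2 \mathcal{E}) = 4M/(\mu L)$, which is not guaranteed for arbitrary initial distance $\|\bx_0 - \bx^*\|^2$. The construction of $\tau$ amounts to advancing the induction start forward by the minimum number of iterations the recursion needs to drive $D_i$ below this threshold, while simultaneously preserving the step-size constraint $\alpha_\tau \leq 1/(2L)$; balancing these two constraints against the contraction rate reproduces $\tau = (4L/\mu)\max\{L\mu\|\bx_0 - \bx^*\|^2/M,\, 1\} - 4L/\mu$ exactly as stated. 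All remaining verifications are manipulations of the definitions of $\mathcal{E}$ and $\alpha_0$.
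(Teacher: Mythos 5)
The paper does not actually prove this corollary --- it is quoted verbatim from the cited reference \cite{Nguyen2019}, so there is no in-paper argument to compare against step by step. Judged on its own terms, your outline follows the standard template for $\mathcal{O}(1/i)$ rates of strongly convex SGD, and the expansion of $D_{i+1}$, the strong-convexity bound on the cross term, and the induction strategy with the burn-in $\tau$ are all sound in spirit. But there is one concrete gap that breaks the derivation of the key recursion.

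Your second-moment bound $E[\|\tilde{g}_i\|^2] \leq 2L^2 D_i + M$ cannot be absorbed by the contraction under the stated step-size condition. You claim that $\alpha_i \leq \frac{1}{2L}$ gives $2L^2\alpha_i^2 \leq \mu\alpha_i$; in fact $2L^2\alpha_i^2 \leq \mu\alpha_i$ requires $\alpha_i \leq \frac{\mu}{2L^2}$, which is strictly stronger than $\alpha_i \leq \frac{1}{2L}$ whenever $L > \mu$. With the prescribed schedule, $\alpha_0 = \frac{1}{2L}$ exactly, so at early iterations your inequality fails for any problem with condition number greater than one, and the recursion $D_{i+1} \leq (1-\mu\alpha_i)D_i + \alpha_i^2 M$ does not follow; at best you would recover a $\mathcal{O}(1/i)$ rate with a larger burn-in of order $L^2/\mu^2$ and different constants, not the bound as stated. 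The repair --- and the route taken in \cite{Nguyen2019} --- is to use per-realization convexity together with $L$-smoothness (co-coercivity) to bound the centered term by the function-value gap, $E\left[\|\tilde{g}_i - \nabla_\delta \hat{f}(\bx^*)\|^2\right] \leq 2L\, E\left[f(\bx_i) - f(\bx^*)\right]$, so that $E[\|\tilde{g}_i\|^2] \leq 4L\,E[f(\bx_i)-f(\bx^*)] + M$. The term $4L\alpha_i^2\,E[f(\bx_i)-f(\bx^*)]$ is then cancelled against the $-2\alpha_i\,E[f(\bx_i)-f(\bx^*)]$ contribution of the cross term precisely when $\alpha_i \leq \frac{1}{2L}$, yielding the clean recursion with the advertised constants. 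This is also where Assumption 3 does real work; it is not merely a matter of ``sign conventions'' as your sketch suggests.
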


We now look to numerically verify the convergence of Algorithm \ref{Alg1} on the problem:
\begin{equation}\label{last_nlp}
    \min_{\bx \in \mathbb{R}^3} E\left[G(\bx,t)\right],
\end{equation}
where the cost function
\begin{equation}\label{cost_f}
    G(\bx,t) = -\left(1 + \frac{3}{4}\cos{\left(2\sqrt{2} \pi t\right)}\right)\left(\bx^T \mathbf{\Sigma} \bx \right) + \varsigma(t),
\end{equation}
with $\mathbf{\Sigma}$ given by
\begin{equation*}
    \mathbf{\Sigma} = \begin{bmatrix}
   2 & -0.5 & 0 \\
   -0.5 & 2 & -0.5 \\
   0 & -0.5 & 2
\end{bmatrix}.
\end{equation*}

We proceed by first demonstrating that an \emph{a priori} accuracy of the SGD algorithm can be written in terms of the asymptotic error of our gradient estimate developed in Theorem \ref{NoisyThm}. This allows us to formalize a parameterization of the error developed in Algorithm \ref{Alg1} as a function of $\delta, h, \sigma,$ and $N,$ and to test the error rates. As before, the additive noise term $\varsigma(t)$ is i.i.d. and $\mathcal{N}(0,\sigma)$. The asymptotic error of the dynamic amplitude-correcting gradient estimates of $G$ are given, in expectation, in Corollary \ref{grad_cor}. Proof of the following comes directly from Theorems \ref{NoiseFreeThm}, \ref{NoisyThm}, and Young's inequality.

\begin{cor} \label{grad_cor} The gradient of the cost function $G(\bx,t)$ in \eqref{cost_f} can be estimated such that given $\sigma, \delta, h> 0,$ where $h^3 < \delta$, there exists positive constants $c_1, c_2,$ and $c_3$ such that
\begin{equation*}
    E\left[||\nabla_{\delta,h}G({\bf 0},t)||^2\right] \leq c_1(1 + N^2)h^2  + c_2 \delta^4 + c_3 \frac{1}{N} \frac{\sigma^2}{\delta^2}.
\end{equation*}
\end{cor}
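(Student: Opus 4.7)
The plan is to split $\nabla_{\delta,h}G(\mathbf{0},t)$ into its stochastic and deterministic error pieces and bound each with the cited theorems. First, observe that $f(\bx) = -\bx^T \mathbf{\Sigma} \bx$ satisfies $\nabla f(\mathbf{0}) = \mathbf{0}$, so the quantity to bound coincides with the total error of the estimator at the origin. Write
\begin{equation*}
\nabla_{\delta,h}G(\mathbf{0},t) \;=\; \underbrace{\left[\nabla_{\delta,h}G(\mathbf{0},t) - \nabla_{\delta,h}F(\mathbf{0},\mathcal{T})\right]}_{=:\,\varepsilon} \;+\; \underbrace{\left[\nabla_{\delta,h}F(\mathbf{0},\mathcal{T}) - \nabla f(\mathbf{0})\right]}_{=:\,e_{\text{disc}}},
\end{equation*}
and apply Young's inequality to obtain $\|\nabla_{\delta,h}G(\mathbf{0},t)\|^2 \leq 2\|\varepsilon\|^2 + 2\|e_{\text{disc}}\|^2$. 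It therefore suffices to bound $\|e_{\text{disc}}\|^2$ deterministically and $E[\|\varepsilon\|^2]$ in expectation, and to collect the constants.

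For the discretization piece, Theorem~\ref{NoiseFreeThm} supplies $\|e_{\text{disc}}\| \leq C^*(h + \delta^2)$. The constant $C^*$ inherits its $N$-dependence from the length $|\mathcal{T}_{h,N}| = (4N+1)h$ of the sampling window, since the temporal drift of $T$ (and hence the residual in the amplitude correction) accumulates over that window, while the spatial $\delta^2$ contribution comes from the local Taylor expansion of $f$ and is $N$-independent. Separating these two contributions as $C^*(h+\delta^2) \leq \widetilde{c}_1(1+N)\,h + \widetilde{c}_2\,\delta^2$, and then applying Young's inequality a second time, yields
\begin{equation*}
\|e_{\text{disc}}\|^2 \;\leq\; c_1(1+N^2)h^2 + c_2 \delta^4,
\end{equation*}
with $c_1, c_2$ absorbing bounds on $\|T\|_\infty, \|T'\|_\infty, \|T^{(4)}\|_\infty$ and on $\|\mathbf{\Sigma}\|$.

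For the stochastic piece, I will use the explicit formula for $\varepsilon$ given in Theorem~\ref{NoisyThm} rather than the $E[\|\varepsilon\|]$ bound, since we need a second-moment estimate. Each component $\varepsilon_i$ is a linear combination of the independent Gaussians $\varsigma_{j,k}$, so $\varepsilon_i$ is itself mean-zero Gaussian with
\begin{equation*}
\mathrm{Var}(\varepsilon_i) \;=\; \frac{1}{(2\mu_T N \delta)^2}\cdot\frac{5\sigma^2}{2}\sum_{k=1}^N (\hat{\be}_i^T \be_k)^2.
\end{equation*}
The quasi-uniform assumption on the antipodal stencil implies $\sum_k (\hat{\be}_i^T \be_k)^2 \lesssim N/n$, so summing $\mathrm{Var}(\varepsilon_i)$ over the $n$ components produces $E[\|\varepsilon\|^2] \leq c_3\,\sigma^2/(N\mu_T^2 \delta^2)$. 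Combining this with the previous bound completes the proof.

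The main obstacle is the discretization step: Theorem~\ref{NoiseFreeThm} bundles all $N$-dependence into the opaque constant $C^*$, so extracting the cleaner separation $C^*(h+\delta^2) \leq \widetilde{c}_1(1+N)h + \widetilde{c}_2 \delta^2$ requires revisiting its proof and tracking the temporal Taylor remainders (which scale with the window length $(4N+1)h$) separately from the purely spatial remainders in $f$. Once that refinement is in hand, the rest is bookkeeping with Young's inequality and a direct variance computation for i.i.d.\ Gaussian noise.
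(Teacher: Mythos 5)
Your proposal is correct and follows exactly the route the paper takes, which is stated there in one line: the bound ``comes directly from Theorems~\ref{NoiseFreeThm}, \ref{NoisyThm}, and Young's inequality.'' You in fact supply more detail than the paper does --- in particular the observation that $\nabla f(\mathbf{0})=\mathbf{0}$ so the left-hand side is the total estimator error, and the tracing of the $(1+N^2)h^2$ term back to the $(4N+1)h$ window-length factor in the bound \eqref{overEst} inside Lemma~\ref{mainLemma} --- and these details are consistent with the paper's appendix.
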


When we ignore the time-dependent amplitude of the cost function $G$ from \eqref{cost_f}, we note that it was constructed to satisfy the assumptions from Theorem \ref{sgd_converge}. In particular, Assumption 1 is satisfied with $\mu = 2.$ In the calculations to follow, the initial position is $\bx_0 := (1,1,1)$, hence the total distance we intend Algorithm \ref{Alg1} to travel is $|| \bx_0 - {\bf 0}|| = \sqrt{3}$. We also note that since $L$ is sensitive to $\varsigma(t)$, it is not precisely known. For appropriately converging step-sizes $\lbrace \alpha_i \rbrace_{i = 1}^{\infty}$, we will see
\begin{equation*}
    || \bx_i - {\bf 0}||^2\rightarrow 0,
\end{equation*}
with probability 1 and when 
\begin{equation*}
    i > \tau =  \max \left\lbrace \frac{2\sqrt{3}L^2}{E\left[||\nabla_{\delta,h}G({\bf 0},t)||^2\right]}, L \right\rbrace,
\end{equation*}
we see
\begin{eqnarray} \label{alg_Error}
    E \left[|| \bx_i - {\bf 0}||^2 \right] & \leq & E\left[||\nabla_{\delta,h}G({\bf 0},t)||^2\right] \frac{1}{i - \tau} \nonumber 
    \\ & \leq & c_1(1 + N^2)h^2  + c_2 \delta^4 + c_3 \frac{1}{N} \frac{\sigma^2}{\delta^2}.
\end{eqnarray}
\noindent Thus, for steps $i > \tau$, the error seen in \eqref{alg_Error} is proportional to that seen for the gradient estimate in Corollary \ref{grad_cor}. 

These error estimates are verified in a series of Monte Carlo studies. For each parametrization, we repeat and store the results from 30 executions of Algorithm \ref{Alg1}, storing the results in
\begin{equation*}
    \text{err}(\delta,h,\sigma,N) = \left\lbrace||\bx_{i_{\text{max}}, k}||^2 \right\rbrace_{k=1}^{30}.
\end{equation*}
In the first experiment, we fix $\delta$, $\sigma$, and $N$ such that their contributions to the error in \eqref{alg_Error} are several orders of magnitude below our choices for $h$. We further assume that $L \approx ||\mathbf{\Sigma}||_2 = 2 + \sqrt{2}/2,$ which gives for $h$ sufficiently small, that our critical algorithm step $\tau$ is $\mathcal{O}(h^{-2}).$ Selecting a fixed step-size $\alpha_i = \delta$, with a fixed stopping point, trivially satisfies the convergence requirements of Theorem \ref{sgd_converge}. In addition, given our estimate of $\tau$, and the minimum travel distance required, selecting $N=5$, and $\delta =$1/100, we find a choice of $i_{\text{max}} = 500$ to be appropriate. The results of this test confirm the \emph{a priori} rate estimate of $\mathcal{O}(h^2)$, and are presented in terms of the average result over the 30 simulations in Table \ref{h_conv_sim}. 
\begin{table} 
\caption{\label{h_conv_sim} Monte Carlo simulations were used to estimate the accuracy of 500 steps from Algorithm \ref{Alg1}, in terms of $h$. Corollary \ref{grad_cor} suggests we should see error converge at a rate of 2. We fix $N$=5, $\delta$ = 1/100, and $\sigma$ = 1E-5. The other sources of error begin to dominate for choices of $h \leq 1/512$.}
\begin{center}
\begin{tabular}{|ccc|} 
     \hline
      $h$ & $\text{AVG}\left(\text{err}(h)\right)$ &  Rate \\
     \hline
     1/16& 1.7E-4 & - \\
     1/32& 3.2E-5 & 2.41 \\
     1/64& 2.2E-6 & 3.87 \\
     1/128& 5.2E-7 & 2.07 \\
     1/256& 1.1E-7 & 2.28 \\
     1/512& 3.7E-8 & 1.53 \\
     \hline 
\end{tabular}
\end{center}
\end{table}

We proceed similarly in the second experiment. We fix $h = 1/1024$, $N=10$, and $\delta=1/100$, varying $\sigma$. Noting that smallest choice for $\sigma = 1/2560$, we again estimate the critical time-step as $\tau =$ 500. The optimal rates are observed and presented in Table \ref{sig_conv_sim}.

\begin{table} 
\caption{\label{sig_conv_sim} Monte Carlo simulations were used to estimate the accuracy of 500 steps from Algorithm \ref{Alg1}, in terms of $\sigma$. We fix $N$=10, $\delta$ = 1/100, and $h$ = 1/1024.  Corollary \ref{grad_cor} suggests we should see error converge at a rate of 2. The other sources of error begin to dominate for choices of $\sigma \leq 1/2560$.}
\begin{center}
\begin{tabular}{|ccc|} 
     \hline
      $\sigma$ & $\text{AVG}\left(\text{err}(\sigma)\right)$ &  Rate \\
     \hline
     1/80    & 4.0E-4 & - \\
     1/160   & 1.2E-4 & 1.73 \\
     1/320   & 4.0E-5 & 2.04 \\
     1/640   & 9.7E-6 & 2.03 \\
     1/1280  & 2.1E-6 & 2.19 \\
     1/2560  & 5.8E-7 & 1.87 \\
     \hline 
\end{tabular}
\end{center}
\end{table}

For the third, we fix $N$=256, $\sigma$ = 1/2048, and $h$ = 1/2048, varying $\delta.$ We maintain our choice of $i_{\text{max}}$ = 500, presenting the results in Table \ref{delta_conv_sim}. We see rates comparable to the $\mathcal{O}(\delta^4)$ rate. 

\begin{table} 
\caption{\label{delta_conv_sim} Monte Carlo simulations were used to estimate the accuracy of 5000 steps from Algorithm \ref{Alg1}, in terms of $\delta$. We fix $N$=256, $\sigma$ = 1/2048, and $h$ = 1/2048. Corollary \ref{grad_cor} suggests we should see error converge at a rate of 4. The other sources of error begin to dominate for choices of $\delta \leq 1/100.$}
\begin{center}
\begin{tabular}{|ccc|} 
     \hline
      $\delta$ & $\text{AVG}\left(\text{err}(\delta)\right)$ &  Rate \\
     \hline
     0.300    & 5.4E-2 & - \\
     0.210    & 1.98E-2 & 2.90 \\
     0.149    & 5.480E-3 & 3.70 \\
     0.105    & 1.38E-3 & 3.98 \\
     0.074    & 3.32E-4 & 4.11 \\
     \hline 
\end{tabular}
\end{center}
\end{table}

In our final experiment, we test the linear convergence rate of the sampling count parameter $N.$ Fixing $h$ = 1/1024, $\delta$ = 1/100, and $\sigma$ = 0.64, varying $N$. We select $i_{\text{max}} = 500$. In Table \ref{N_conv_sim} we see convergence at a rate slightly better than the expected $\mathcal{O}(N^{-1})$ rate.  

\begin{table} 
\caption{\label{N_conv_sim} Monte Carlo simulations were used to estimate the accuracy of 500 steps from Algorithm \ref{Alg1}, in terms of the inverse sample count $N^{-1}$. We fix $\delta$ = 1/100, $h$ = 1/1024, and $\sigma = $0.64.  Corollary \ref{grad_cor} suggests we should see error converge at a rate of 1.}
\begin{center}
\begin{tabular}{|ccc|} 
     \hline
      $N$ & $\text{AVG}\left(\text{err}(N)\right)$ &  Rate \\
     \hline
     8    & 6.2E-1 & - \\
     16   & 3.6E-1 & 0.77 \\
     32   & 1.7E-1 & 1.05 \\
     64   & 1.0E-2 & 1.86 \\
     128  & 3.3E-3 & 1.67 \\
     256  & 1.1E-3 & 1.58 \\
     \hline 
\end{tabular}
\end{center}

\end{table}

\section{Compound Refractive Lens Alignment on  Simulated XFEL Experimental Beamline}\label{XFEL}
What follows is a proof-of-concept implementation of Algorithm \ref{Alg1} by simulating the alignment of a CRL assembly on a scientific beam-line with a highly dynamic intensity. We begin by developing a model X-ray transmission function from data collected at the Advanced Photon Source (APS) at Argonne National Laboratory \cite{Breckling2021}. This steady-amplitude model is then augmented with a time-dependent intensity function, recorded during an experiment performed at the Pohang Accelerator Laboratory's XFEL facility (PAL-XFEL).

Given that access to XFEL beam-lines is competitive and limited, our goal is to demonstrate the feasibility of our amplitude-correcting SGD approach to overcome the beam intensity fluctuations inherent to XFEL facilities. We break this effort into two parts: the construction of our model cost function, and the results of our implementation of Algorithm \ref{Alg1} using that cost function in settings similar to those seen at PAL-XFEL. 

\subsection{Developing a Model Cost Function}
Let $\Omega_{max} \subset \mathbb{R}^4$ denote the travel limits for the four stepper motors that determine the orientation of the CRL. For a given orientation $\bx = (x,y,r_x,r_y) \in \Omega$, let the resulting image deposited on the detector panel be denoted as $I(\bx),$ or simply $I$ when convenient; see Figure \ref{hugh_image}. Further, we describe position of a given pixel by its indices $I_{i,j}.$ Example detector images are shown in Figures \ref{sensor_imgs}(a) and (b).

Let $\xi(I)$, $\mu(I)$ and $\sigma(I)$ denote the median, mean, and standard deviation, respectively, of the pixel values of the image $I$. We then constrain
$I$ to a selected region of interest (ROI) 
defined as
$$\hat{I}_M := \left\lbrace I_{i,j} \in I \ \Big| \ |I_{i,j}-\xi(I)| > M \times \sigma(I) \right\rbrace,$$
where $M > 0$ is a user-selected threshold parameter. In practice, we found $M=2$ to be a good choice. Figure \ref{sensor_imgs} (c) and (d) highlight the corresponding ROIs, $\hat{I}_M.$ 

\begin{figure} 
    \begin{center}
    \includegraphics[width=2cm]{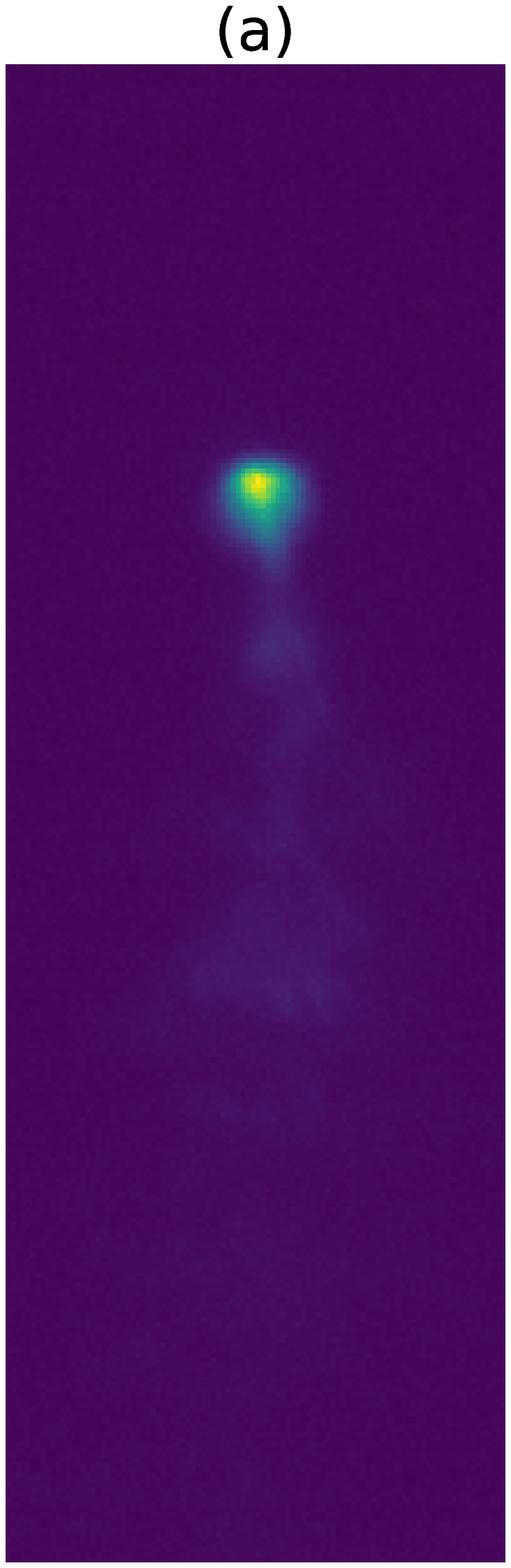} \ 
    \includegraphics[width=2cm]{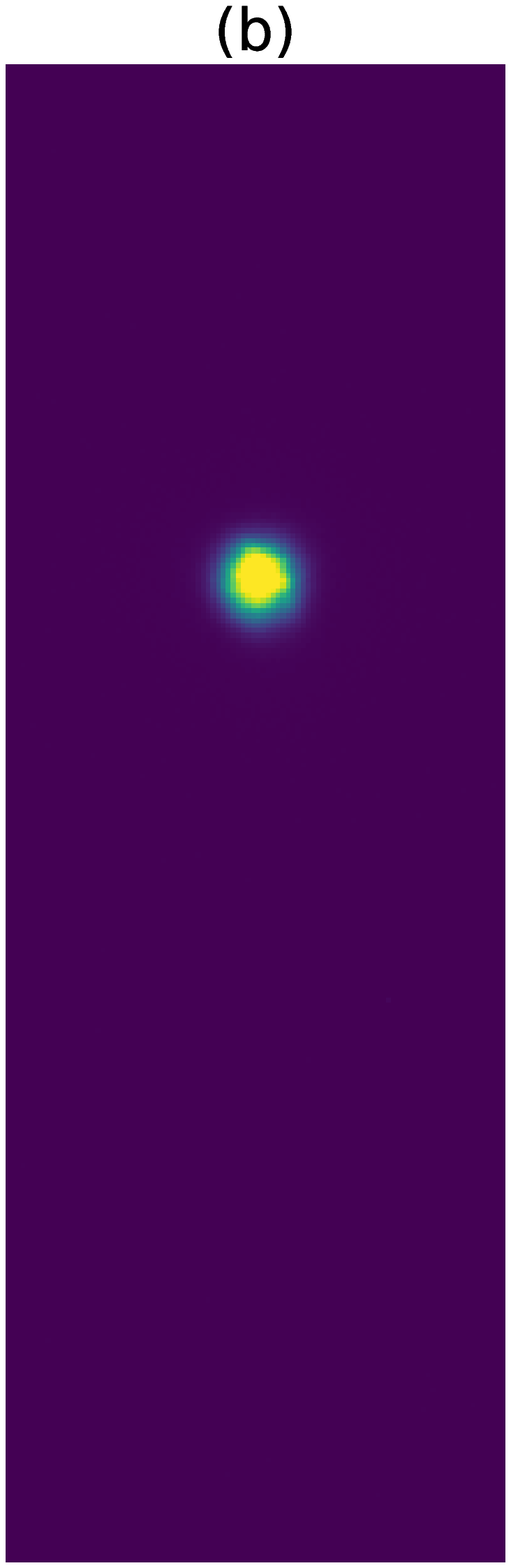} \ 
    \includegraphics[height=5.75cm]{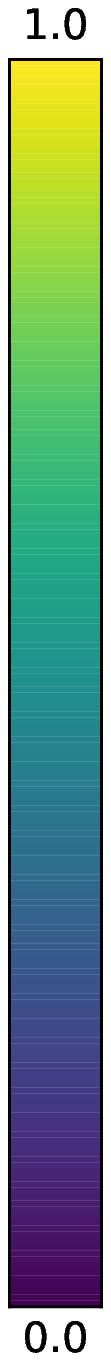} \hspace{1cm} 
    \includegraphics[width=2cm]{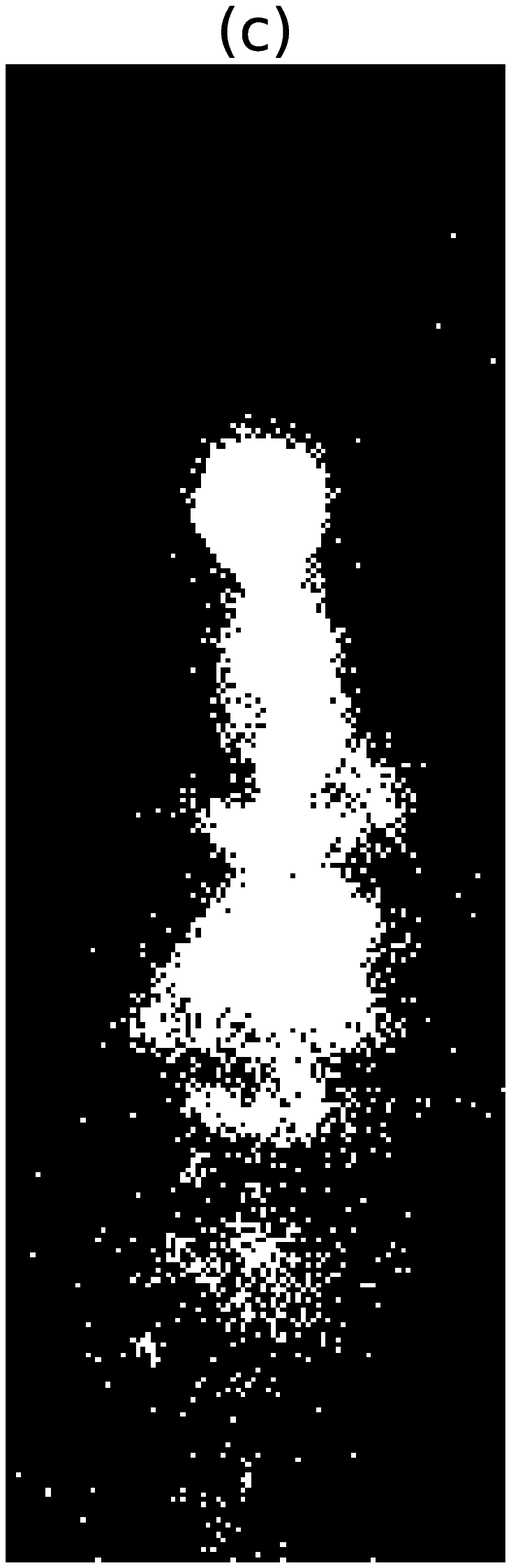} \ 
    \includegraphics[width=2cm]{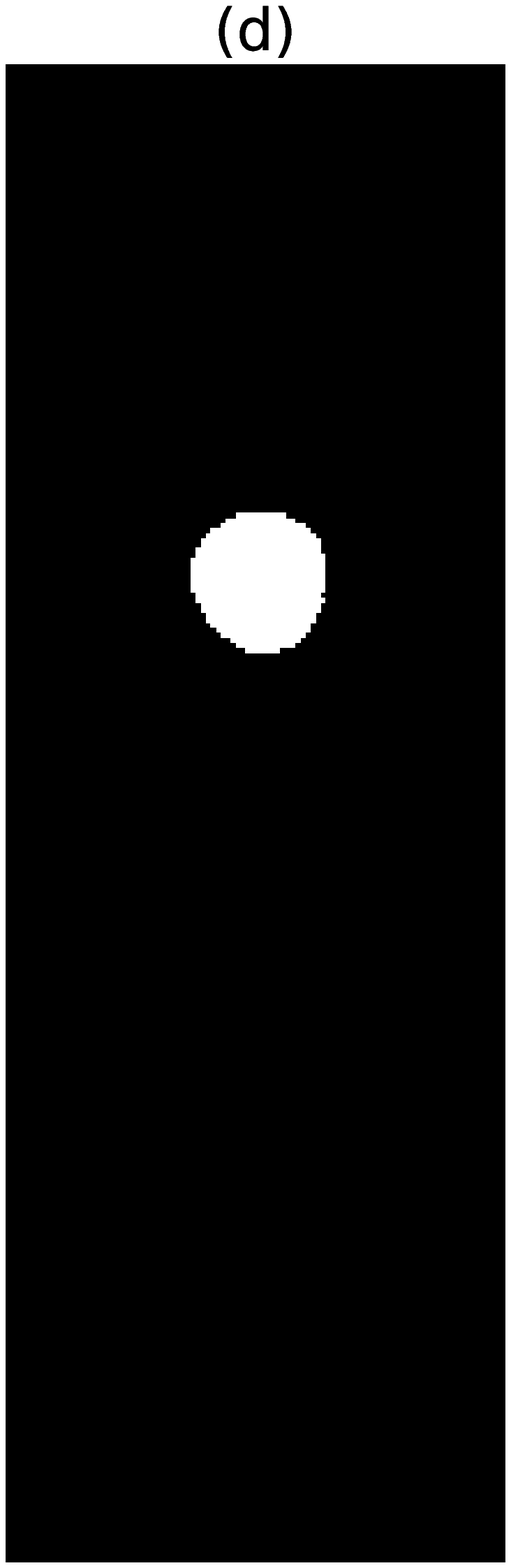} \ 
    \includegraphics[height=5.75cm]{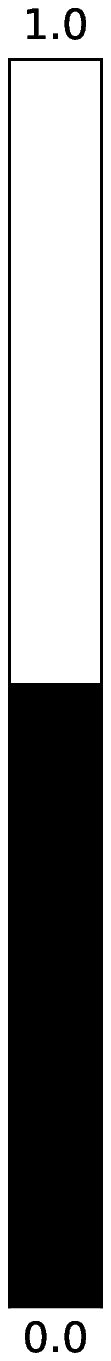}
    \caption{Figure (a) is a cropped region collected from the imaging sensor when the CRL was poorly aligned. Figure (b) is the same cropped region, but shows the result from a well-aligned CRL. The images are shown on the same color axis, after feature normalizing against the maximum pixel value recorded. Figures (c) and (d) are binary images depicting the pixels identified in the ROI for Figures (a) and (b) respectively. \label{sensor_imgs}}
    \end{center}
\end{figure}

In the synchrotron experiments performed at the Advanced Photon Source in \cite{Breckling2021}, a set of coordinates found by manual alignment were defined as the ground-truth to provide the ``well-aligned'' position of the CRL. We denote that position as $\bx^* = (x^*, y^*, r_x^*, r_y^*)$. This ground truth served two purposes. First, we were then able to define a feature scaling such that our metric of X-ray transmission, in terms of CRL orientation, 
\begin{equation}\label{crude_cost}
    f(\bx; M) := \mu\left(\hat{I}_{M}(\bx)\right),
\end{equation}
had a maximal value of 1. Second, the ground-truth position allowed us to establish a four-dimensional rectangular region $\hat{\Omega} \subset \Omega_{max}$ around the best point that contained the support of $f$ above the noise floor. With this ground-truth and 4D window, we then collected several raster scans of $f(\hat{\Omega}; M=2)$. We make use of a full four-dimensional scan, and two high-resolution, independent, 2-dimensional raster scans of $\hat{\Omega}$.

\begin{figure}
    \begin{center}
    \includegraphics[width=4cm]{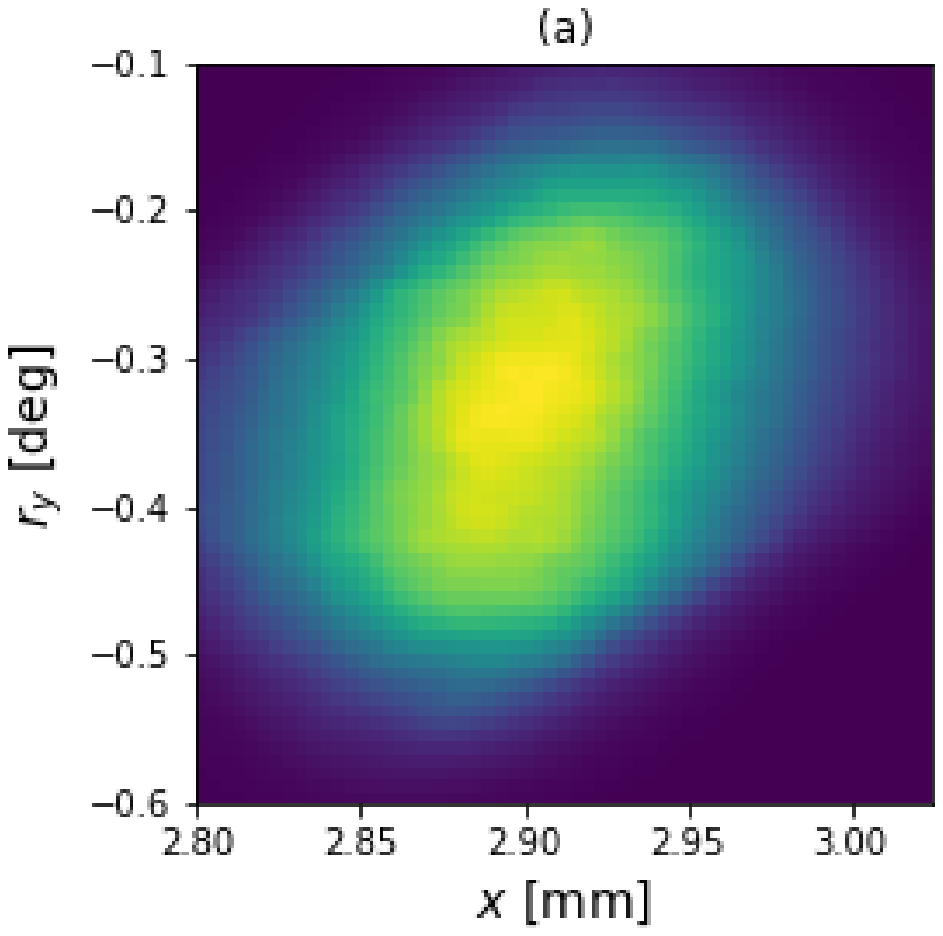} \ 
    \includegraphics[width=4cm]{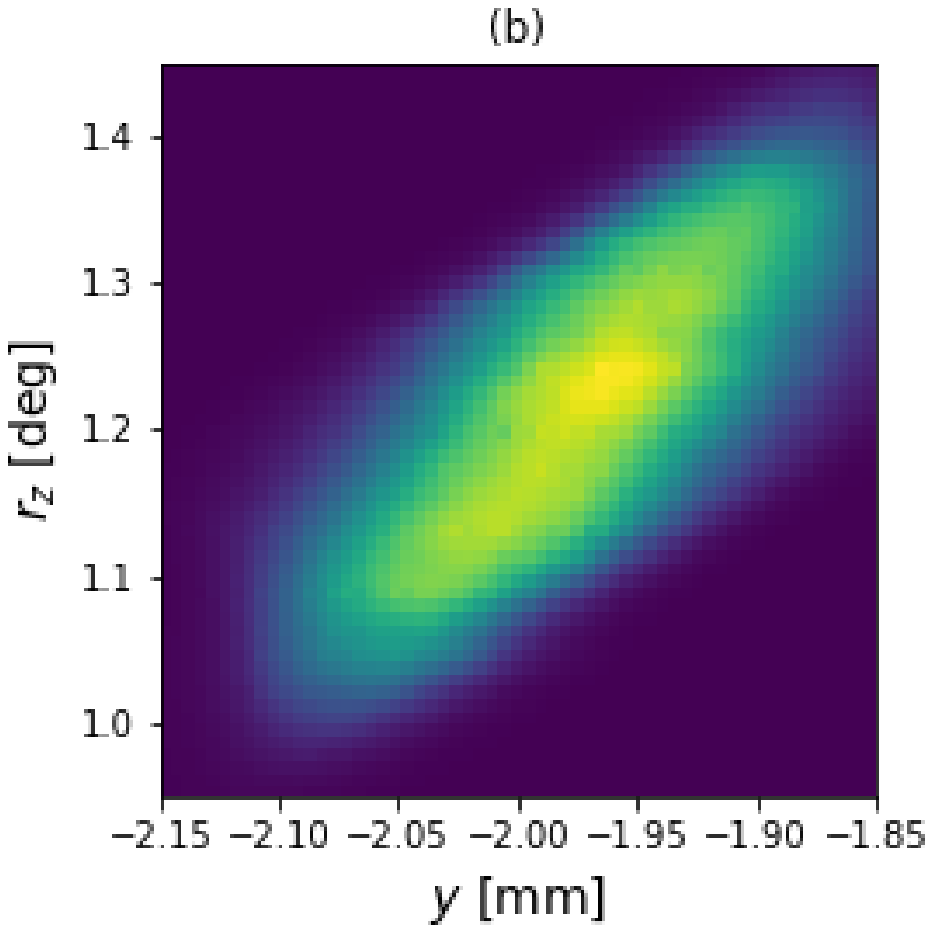} \\
    0 \ \includegraphics[width = 6cm]{colorbar.eps} \ 1
    \caption{Two 2D raster scans of $f$ in $\hat{\Omega}$ are depicted above. Both figures are mutually min-max normalized, and plotted on the same color axis.\label{intensity_plots}}
    \end{center}
\end{figure}

Assuming a steady beam amplitude, it follows from the model developed by Simons \textit{et al.} that an idealized transmission function $f:\mathbb{R}^4 \rightarrow \mathbb{R}^+$ is given by a 4-variate Gaussian distribution \cite{simons17}. We generalize that model as 
\begin{equation} \label{regression}
    f_{\text{Simons}}(\bx; a,b,\mathbf{A}, \hat{\bx}) = a\exp{\left(-(\bx-\hat{\bx})^T  \mathbf{A}  (\bx-\hat{\bx}) \right)} + b,
\end{equation}
where $a,b \in \mathbb{R}$, the matrix $\mathbf{A} \in \mathbb{R}^{4\times4}$ is symmetric, and $\hat{\bx} \in \mathbb{R}^4$ is the position associated with optimal lateral alignment. Fitting the four-dimensional raster scan data $f(\hat{\Omega}; M=2)$ to Simons' model \eqref{regression} gives the idealized X-ray transmission $f_{\text{Simons}}^*(\bx)$. We present two, two-dimensional slice views of $f_{\text{Simons}}^*(\bx)$ in Figure \ref{f_simons_plots}. 

\begin{figure}
    \begin{center}
    \includegraphics[width=4cm]{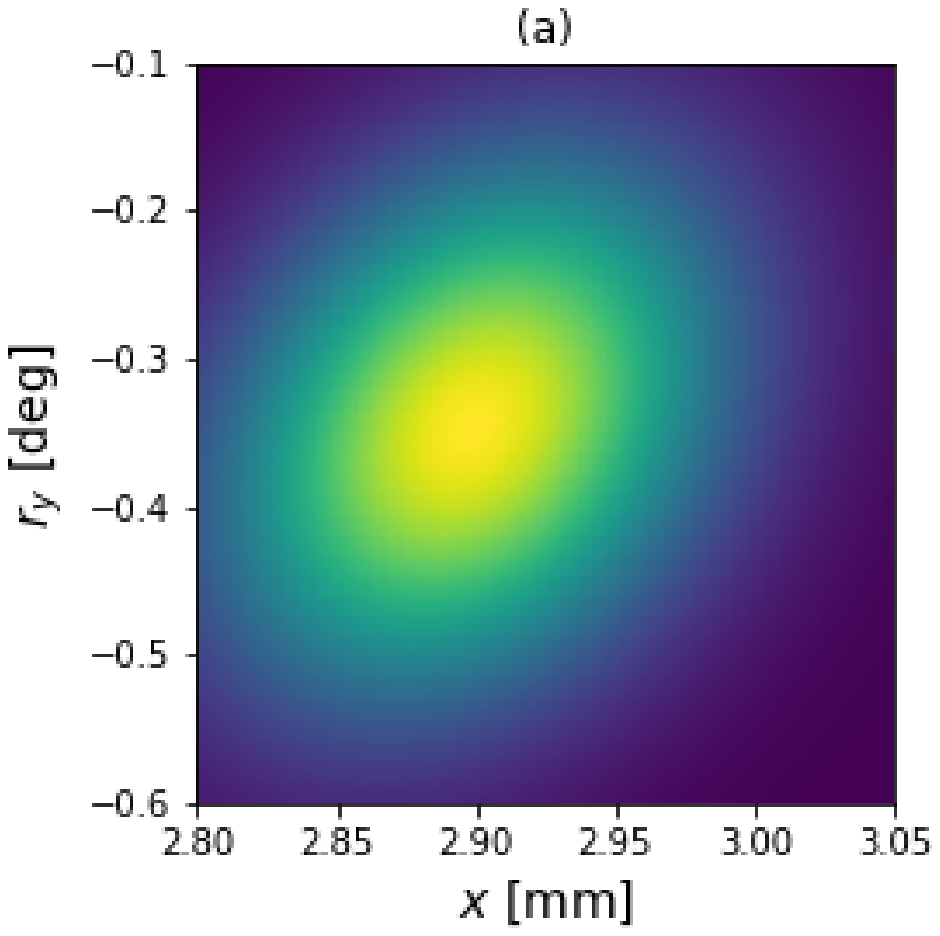} \ 
    \includegraphics[width=4cm]{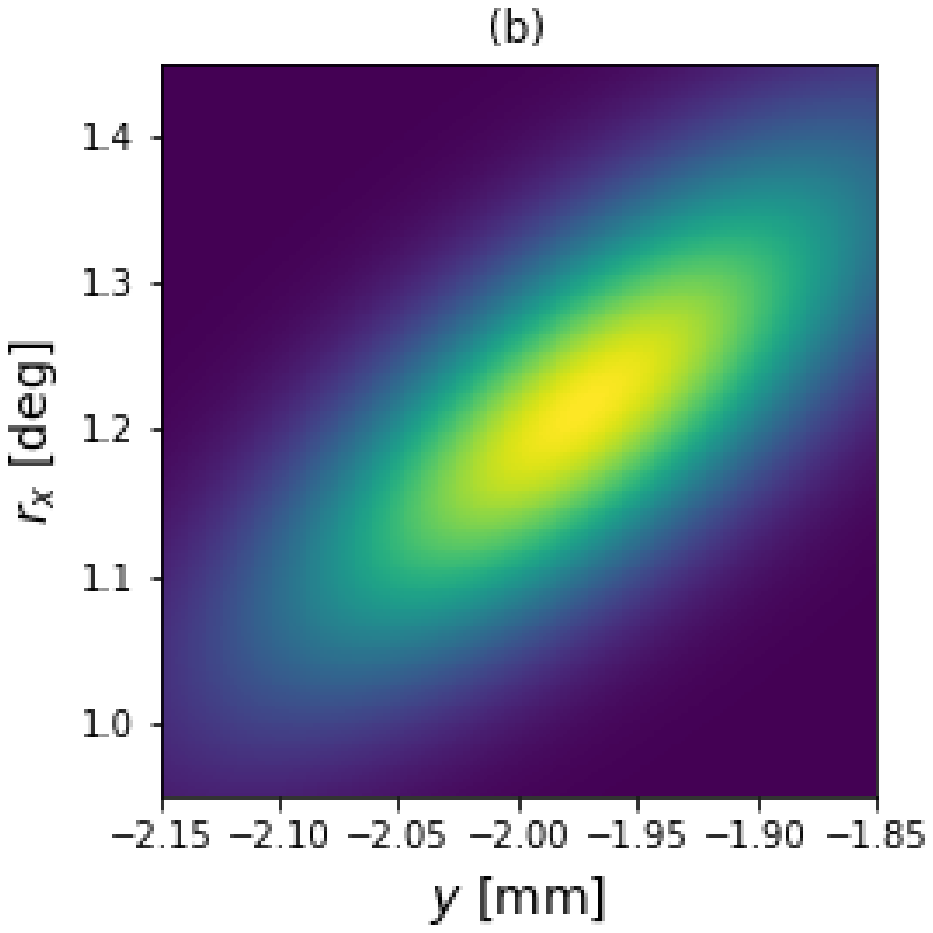} \\
    0 \ \includegraphics[width = 6cm]{colorbar.eps} \ 1
    \caption{Depicted here are 2D slices selected from $f_{\text{Simons}}^*(\hat{\Omega})$. The aspect ratios were selected to agree with Figures (a) and (b) from Figure \ref{intensity_plots}. \label{f_simons_plots}}
    \end{center}
\end{figure}

To model the noise functions that are characteristic to the XFEL light sources, we include additive measurement noise as $\varsigma_\Omega(t)$. Let $\text{diam}(\hat{\Omega})$ denote the maximal \text{diam}eter of the set $\hat{\Omega}$. We collected a sampling $\mathcal{S} = \lbrace \bx_i \rbrace_{i=1}^{500}$ such that for every orientation $\bx_i$, $||\bx_i - \bx^*|| > \text{diam}(\hat{\Omega})$.  We found that $\sigma(f(\mathcal{S};M=2)) \approx 4.5 \times 10^{-3}.$ We then model the time-series of additive noise $\varsigma_\Omega(t)$ as i.i.d. and $\mathcal{N}(0, 4.5 \times 10^{-3})$. 

We additionally consider fluctuations that occur because of pointing jitter (from the SASE generation scheme) \cite{kang2017hard}. We assume the position and direction of the beam may randomly fluctuate as a function of the beam's divergence profile, which was estimated at the APS to be 6.5$\times 10^{-3}$ Radians. We account for jitter in our model as random perturbations of the orientation vector $\bx$ in the $r_x$ and $r_y$ directions. Further, we expect that the beam will jitter randomly within 10\% of the beam-divergence. In doing so, we define 
\begin{equation*}
    \Theta(t) = (0, 0, \theta_x(t), \theta_y(t))
\end{equation*}
where $\theta_x$ and $\theta_y$ are respectively i.i.d and $\mathcal{N}(0,6.5\times 10^{-4})$.

We lastly introduce the fluctuating intensity of the beam over time. To this end, we utilize the measured shot-to-shot intensity values recorded at the PAL-XFEL facility, which was recorded using a quadrant beam position monitor (QBPM) at 30 Hz \cite{DresselhausMarais2020}. We feature-scale the raw pulse-to-pulse time-series data by normalizing the full signal against the mean recorded value. This scaled signal is written as $T_{\text{PAL}}(t,\kappa)$ where $\kappa$ determines the number of pulses averaged during a data collection event. In Figure \ref{intensity_over_time} we show $T_{\text{PAL}}(t,1)$ in dark gray, $T_{\text{PAL}}(t,8)$ in light gray, and $T_{\text{PAL}}(t,264)$ in red. The mollified signals at $\kappa=8$ and $\kappa=264$ respectively represent the average beam intensity over a sampling interval, and the amount of time required to collect all samples necessary to compute the amplitude-correcting gradient. 

\begin{figure}
    \centering
    \includegraphics[width=12cm]{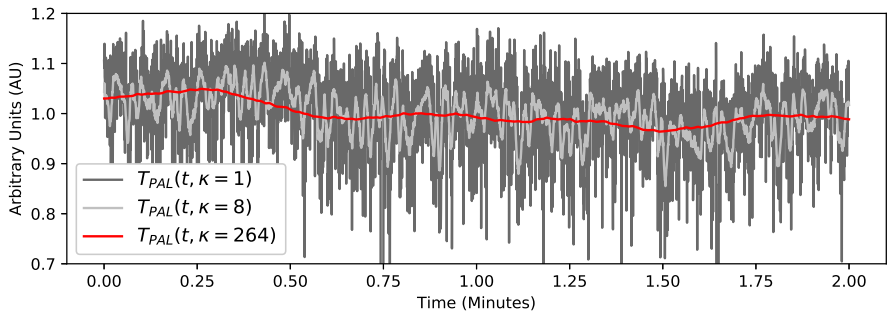}
    \caption{This figure depicts a two-minute interval of the signals $T_{\text{PAL}}(t,\kappa=1)$ in dark gray, $T_{\text{PAL}}(t,\kappa=8)$ in light gray, and $T_{\text{PAL}}(t,\kappa=264)$ in red. All signals are feature normalized by their mean value. The data was recorded at the PAL-XFEL facility. {\color{green} \cite{DresselhausMarais2020}} \label{intensity_over_time} }
\end{figure}

Our full cost function model is hence written and evaluated as
\begin{equation}\label{xfel_cost}
    G_{\text{XFEL}}(\bx,t; \kappa) = -T_{\text{PAL}}(t,\kappa) f_{\text{Simons}}^*(\bx + \Theta(t)) + \varsigma_\Omega(t).
\end{equation}

\subsection{Solving the CRL Alignment Problem}
We now endeavor to study the performance of Algorithm \ref{Alg1} on our model of the CRL alignment problem
\begin{equation*}
    \min_{\bx \in \hat{\Omega}} E\left[G_{\text{XFEL}}(\bx,t;\kappa = 8)\right], \ \forall t > 0.
\end{equation*}
Our goal is to identify a range of nominal parameter choices for Algorithm \ref{Alg1} that can be implemented as a starting point at an XFEL facility. 

We begin by noting that when sampling \eqref{xfel_cost} to estimate $\nabla f_{\text{Simons}}^*(\bx)$ as per Definition \ref{def_grad}, we consider $N=8$ quasi-uniformly distributed antipodal pairs in our differencing stencil. We select our effective integration time interval for the camera to be $h_{\text{cam}} = 8/30$ seconds, and establish the full time interval required to complete the scheme as $\mathcal{T}_{h,N} := [t_0, (4N+1)h + t_0].$ Further, we make use of the estimate
\begin{equation*}
    \mu_T = T_{\text{PAL}}(t_0 + 264/30,264),
\end{equation*}
where $t_0$ is the moment we began estimating the gradient.

For each execution of Algorithm \ref{Alg1} that follows, the stopping condition is established to be a maximal iteration count $i_{\text{max}}.$ No other stopping conditions are considered. Additionally, we conceptualize our initial gradient sphere radius $\alpha_0$ as some multiple $C r,$ where $r = ||\bx_0 - \hat{\bx}||,$ though we don't expect users to know what $r$ is \textit{a priori.} At each step $i$, the gradient sampling radius $\alpha_0$ is scaled by a cooling factor such that
\begin{equation*}
    \alpha_i = \frac{\alpha_0}{(1 + i)^\gamma},
\end{equation*}
where $\gamma > 0$ and fixed. Further, we enforce a maximum step size $||\bx_i - \bx_{i-1}|| \leq \delta_i = \alpha_i.$ Given that the true distance $r$ is unknown upon initialization, the executions that follow are intended to identify a performance relationship between $\alpha_0$ with respect to $r$, $\gamma$, and the stopping condition.

We demonstrate a single execution of Algorithm \ref{Alg1} with an initial position $\bx_0$ selected randomly a distance of $r = 0.4$ from $\hat{\bx}.$ We note that this Euclidean distance is significantly further away from $\hat{\bx}$ than the positions selected during the manually-tuned rough alignments completed during data collections at the more stable synchrotron source at APS \cite{Breckling2021}. We fix $\gamma = 0.3$, select $\alpha_0$ according to the distance scalar $C = 3.0$, assign a momentum term $\beta = 0.15,$ and set the stopping condition to $i_{\text{max}} = 100$ iterations. In addition to the time required to collect the image data from the camera sensor $h_{\text{cam}}$, we need to include an estimate of the time required to move the four stepper motors, and process the data. We assume $h_{\text{move}}=5/30s$. Given the full time interval time interval $h_{\text{total}} = h_{\text{cam}} + h_{\text{move}} = 13/30s$, the total execution time assumed necessary to reach 100 iterations is 
\begin{equation*}
    (4N+1) \times i_{\text{max}} \times h_{\text{total}} = 1430s,
\end{equation*}
 or 23.8 minutes. A figure depicting the particular route taken is presented in the 2D projections shown in Figure \ref{single_exec}.

\begin{figure}[h]
\centering
\includegraphics[width=9cm]{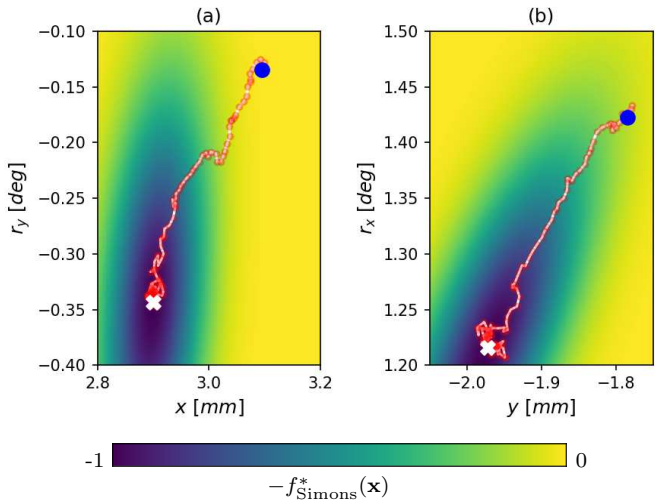} \\
\ \ -1 \includegraphics[width=6cm]{colorbar.eps} 0 \\
$-f_{\text{Simons}}^*(\bx)$
\caption{\label{single_exec}
A single execution of Algorithm \ref{Alg1} with an initial position $\bx_0$ selected randomly at a distance $r = ||\bx_0 - \hat{\bx}|| = 0.4.$ The initial step-size $\alpha_0$ is fixed to $3r = 1.20,$ and scaled with each step by the cooling parameter $\gamma = 0.3.$ The slices depicted in (a) and (b) use the optimal off-axis values in $\hat{\bx}$. The blue dots depict the initial position projected onto the respective 2D planes, the white crosses depict the optimum alignment coordinates $\hat{\bx}$, while the red dots depict the 100 positions $\bx_i$. Each position is connected sequentially by a white line.}
\end{figure}

Next, we present the result of three Monte Carlo experiments. We maintain the parameter choices established in the execution above, varying only the stopping condition $i_{\text{max}} = 50, 100,$ and $200$. Each Monte Carlo executes Algorithm \ref{Alg1} to completion 100 times, varying the initial position randomly on  $\partial \mathcal{B}_r(\bx_0)$ where $r = 0.4.$ The results depicted in Figure \ref{final_MC_1} demonstrate the expected convergence behavior for those well-selected parameters. 

\begin{figure}
    \centering
    $i_{\text{max}} = 50; (11.9 \ \text{Minutes})$\\
    \includegraphics[width=7cm]{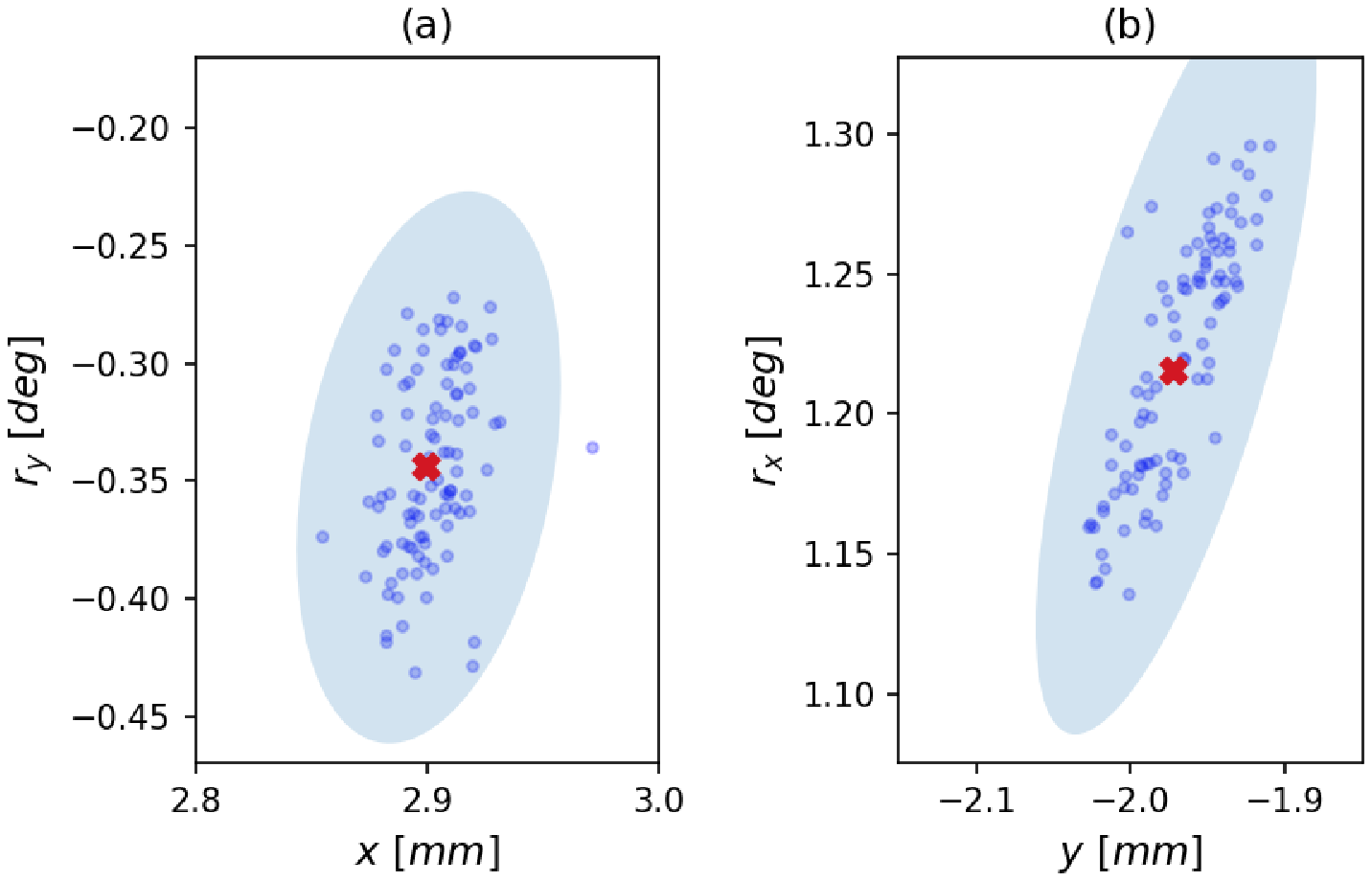}\\
    $i_{\text{max}} = 100; (23.8 \ \text{Minutes})$ \\
    \includegraphics[width=7cm]{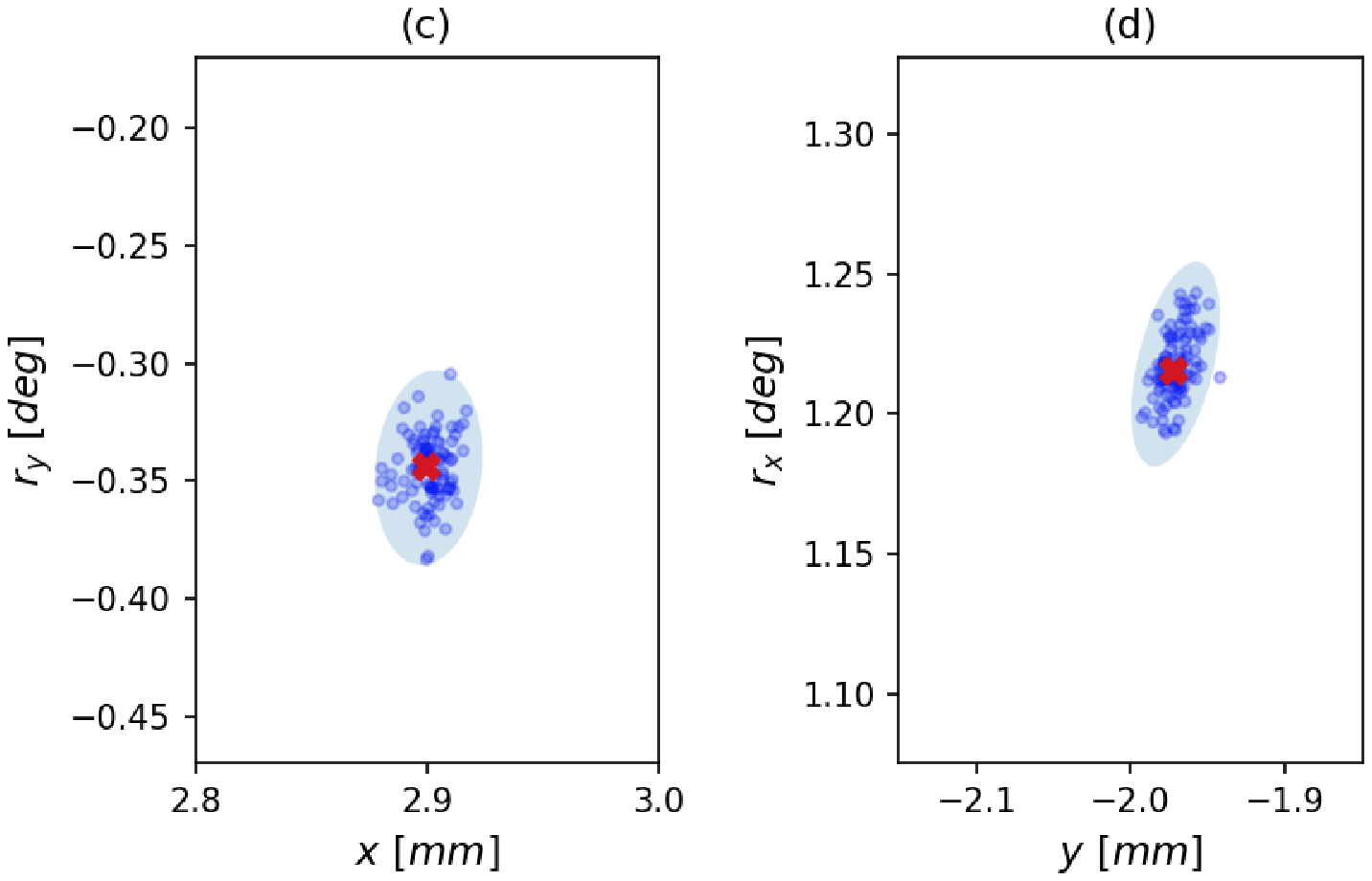}\\
    $i_{\text{max}} = 200; (47.7 \ \text{Minutes})$ \\
    \includegraphics[width=7cm]{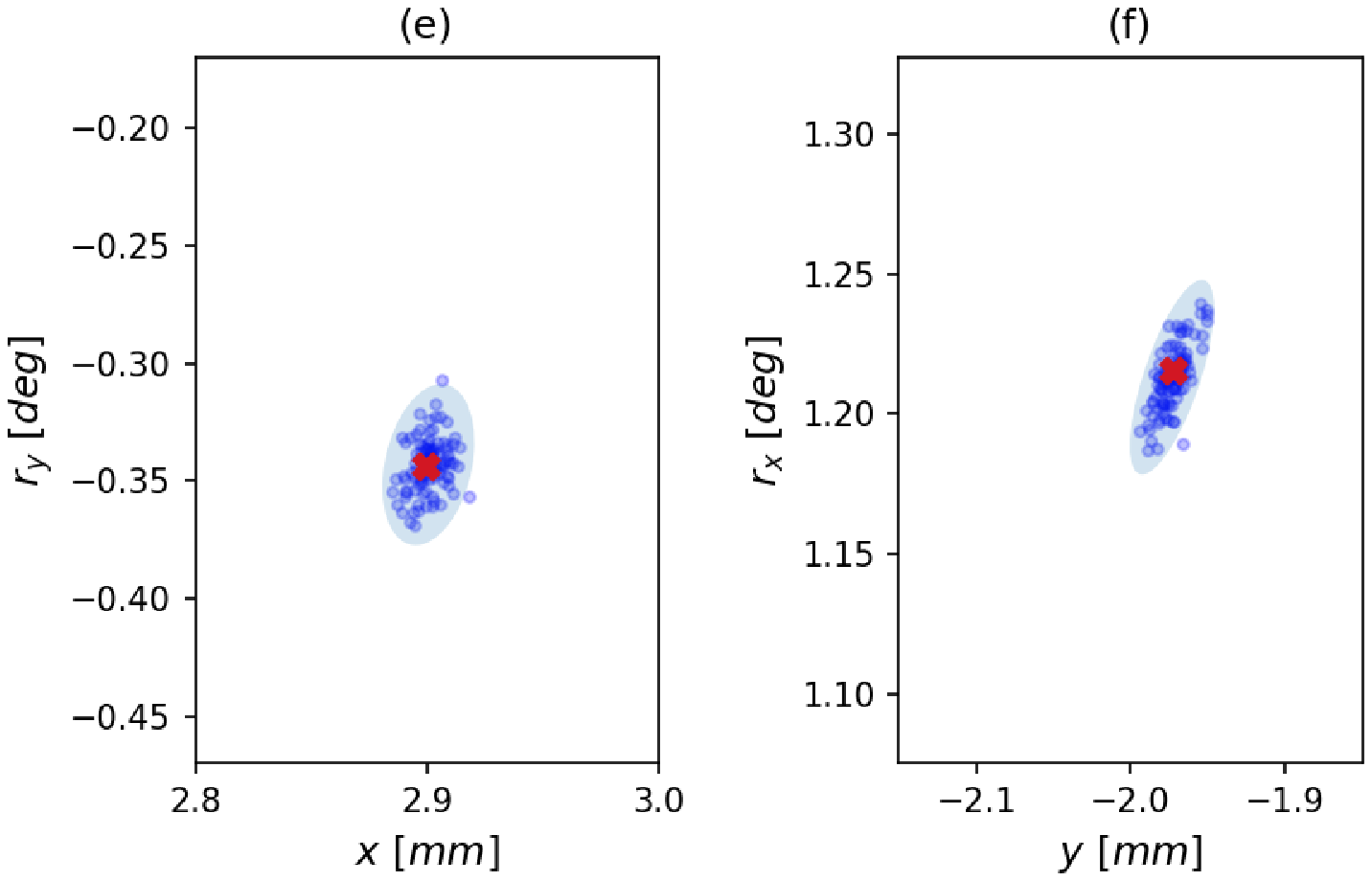}\\
    \caption{\label{final_MC_1}Depicted above are the results of three Monte Carlo simulations, wherein Algorithm \ref{Alg1} is executed 100 times to solve the synthetic CRL alignment problem, varying the stopping condition $i_{\text{max}}$. The point $\hat{\bx}$ is depicted in each figure as a red cross. Figures (a) and (b) show the spatial distribution of results when $i_{\text{max}} = 50$. Similarly, Figures (c) and (d) denote the results when $i_{\text{max}} = 100$, and Figures (e) and (f) show $i_{\text{max}} = 200$. The result of a particular execution is shown as a dark blue dot. The blue ellipses highlight the 99.3\% uncertainty region.}
\end{figure}

What remains to be assessed is performance as a function of the user's choice of step size, and cooling parameter. We consider two values for the cooling parameter $\gamma$,  five initial step-size scales $C$, and six stopping conditions $i_{\text{max}}.$ For each particular set of parameters, Algorithm \ref{Alg1} is executed 100 times, where the starting position $\bx_0$ is again sampled randomly from $\partial \mathcal{B}_r(\hat{\bx})$. These regions demonstrate a collection of parameters that tend to reliably converge under an hour ($i_{\text{max}} < 200$ iterations.)  

\begin{figure}
    \centering
    \includegraphics[width=5cm]{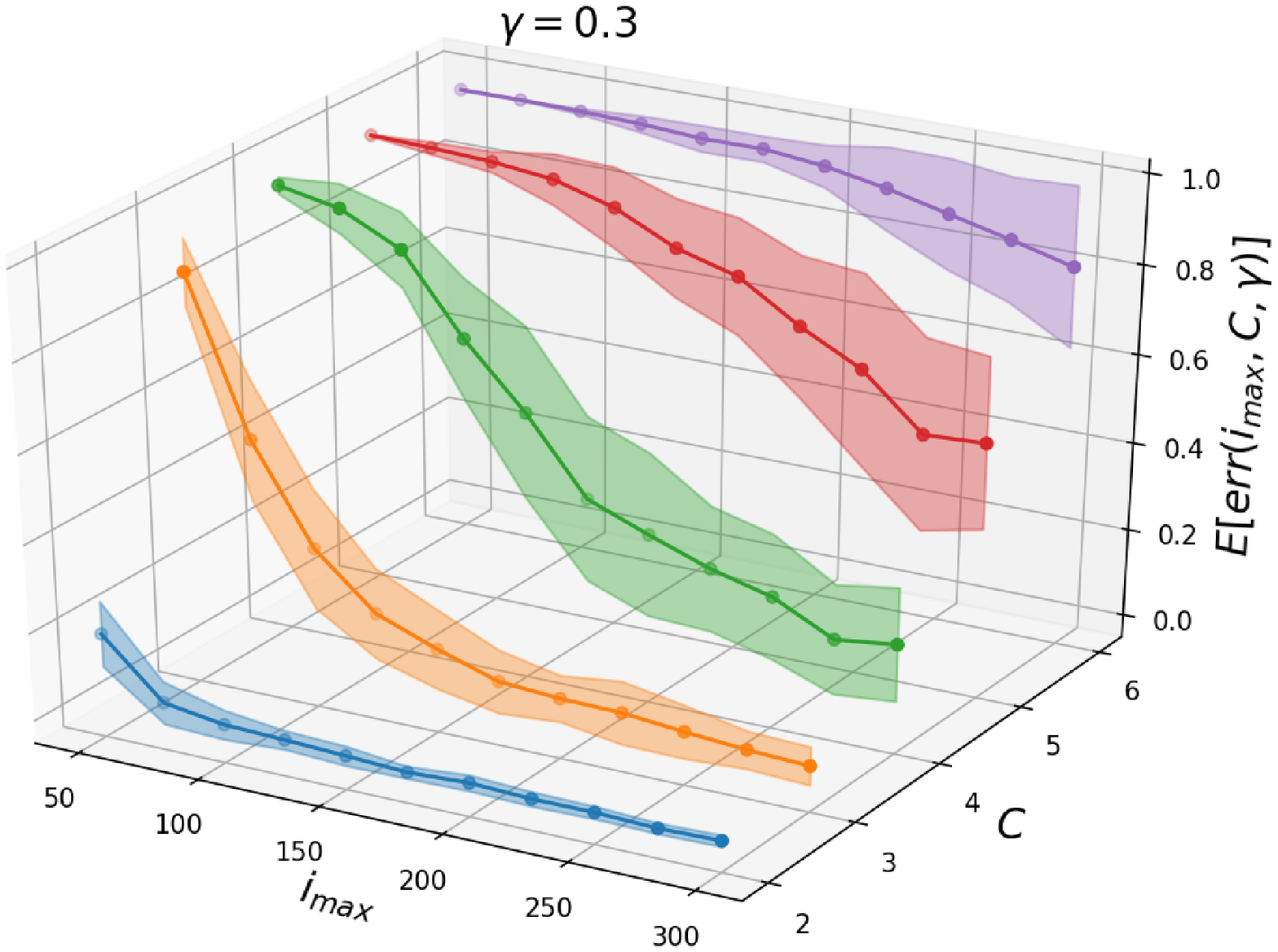} \ 
    \includegraphics[width=5cm]{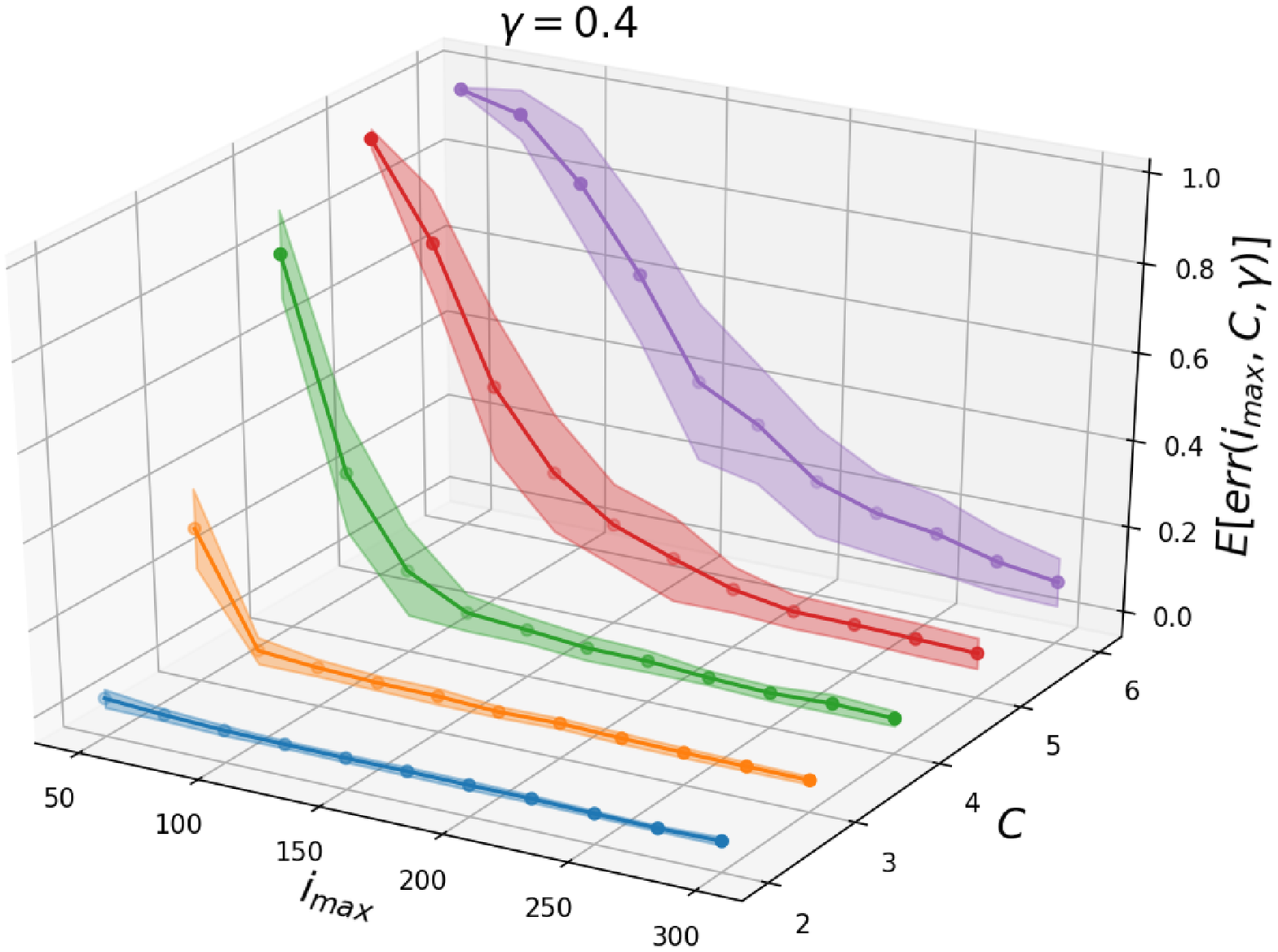}\\
    \caption{Depicted above are the results from 60 Monte Carlo studies varying $i_{\text{max}}$, the cooling parameter $\gamma$, and gradient sphere-radius interval $\alpha_0$. We vary $\alpha_0$ as a multiple of the initial position's distance from ground truth $C||\bx_0 - \hat{\bx}|| = Cr.$ Figures (a) and (b) fix $\gamma$ as 0.3 and 0.4, respectively. Each Monte Carlo simulation executes Algorithm \ref{Alg1} a total of 100 times; the average value of which is depicted normalized by $r$, and depicted as the vertical height. The shaded regions above and below the interpolated lines represent the standard deviation trend, in terms of the 4D Euclidean distance. \label{final_all}}
\end{figure}

While additional parameters remain to be thoroughly studied, namely the momentum term $\beta,$ we found that choices of $\beta > 0.15$ tended to perform poorly over longer periods of computation time. In particular, when $i_{\text{max}} > 50$ we saw no apparent improvement to performance. Given that the settings identified above demonstrate convergence that tends to improve with additional computation, an attractive behavior for an unsupervised optimization method, we advise being conservative with $\beta.$ We additionally note that our choice to equate the maximal step-size with the gradient \text{diam}eter was born out of observation. Choices of $\alpha_i$ substantially larger than $\delta_i$ frequently resulted in failure over longer time intervals. Lastly, we observed that selecting $\gamma$ too large tended to collapse $\alpha_i$ too quickly, which was also detrimental to long-time performance. Conversely, selecting $\gamma$ too small tends to result in slow convergence. 

\section{Conclusions}\label{conclusions}
The motivation for this work was encountered while attempting to automate the task of laterally aligning optics at an X-ray Free-Election Laser (XFEL) facility. These facilities, in aggregate, are capable of generating extremely bright pencil beams of X-ray light, but from moment to moment that brightness fluctuates in time. If not for the stochastic noise sources and the apparent intensity fluctuations, the task of orienting beam-line optics reduces to a rather simple minimization problem \cite{simons17}. While the apparent level of stochastic measurement noise is certainly tractable for many stochastic descent methods, the intensity fluctuations are so severe that they required a separate, independent treatment. 

In this paper, we introduced a differencing scheme to estimate the gradient of a cost functional potentially corrupted by both stochastic noise and independent amplitude fluctuations. We assume that only one position in the search space can be measured at any particular moment in time. Thus, any finite differencing scheme is going to require procedurally moving from point to point, recording each intensity along with the corresponding position and time. In this scheme, we account for the fluctuating amplitude by introducing additional samples at a single, fixed location central to the differencing stencil. By alternating these samples sequentially in time, separating the resulting data post-hoc provides a proportional estimate of the functional's amplitude. This additional signal is then interpolated along the time axis, and subtracted from the corresponding signal generated sequentially by the finite differencing stencil. When well-sampled in space and time, this method is effective at detrending those measurements. 

We included a detailed error analysis of this amplitude-correcting gradient estimate, as well as numerical benchmarking of its performance in nonlinear programming problems solved with SGD. Additionally, given that access to XFEL facilities are highly limited, we included a proof-of-concept implementation of an amplitude-correcting SGD method that we believe shows promise. In doing so, we identified regions of parameter choices that will likely be effective in a similarly-configured apparatus. 

\section*{Acknowledgments} 
This manuscript has been authored in part by Mission Support and Test Services, LLC, under Contract No. DE-NA0003624 with the U.S. Department of Energy, National Nuclear Security Administration (DOE-NNSA), NA-10 Office of Defense Programs, and supported by the Site-Directed Research and Development Program. The United States Government retains and the publisher, by accepting the article for publication, acknowledges that the United States Government retains a non-exclusive, paid-up, irrevocable, world-wide license to publish or reproduce the published content of this manuscript, or allow others to do so, for United States Government purposes. The U.S. Department of Energy will provide public access to these results of federally sponsored research in accordance with the DOE Public Access Plan. The views expressed in the article do not necessarily represent the views of the U.S. Department of Energy or the United States Government. DOE/NV/03624--1406.

Portions of this work were performed at High Pressure Collaborative Access Team (HPCAT; Sector 16), Advanced Photon Source (APS), Argonne National Laboratory. HPCAT operations are supported by the DOE-NNSA's Office of Experimental Sciences.  The Advanced Photon Source is a DOE Office of Science User Facility operated for the DOE Office of Science by Argonne National Laboratory under Contract No. DE-AC02-06CH11357

Sunam Kim, Sangsoo Kim, and Daewoong Nam would like to acknowledge support from the National Research Foundation of Korea (NRF), specifically NRF-2019R1A6B2A02098631 and NRF-2021R1F1A1051444.

Part of this work was performed under the auspices of the U.S. Department of Energy by Lawrence Livermore National Laboratory under Contract DE-AC52-07NA27344. We also acknowledge the support of the Lawrence Fellowship in this work.

\bibliographystyle{spphys}       

 \bibliography{ref}






\appendix
\section{Accuracy Estimates}
\label{AccuracyEst}
Below we have appended error estimates of the amplitude-correcting gradient tool presented in this manuscript. Accuracy is discussed in terms of both spatial and temporal discretizations, as well as the case where Gaussian, i.i.d. noise is present.  

\begin{lemma}[Error Estimate of a Single Directional-Derivative Stencil on Smooth Functions]\label{mainLemma} Under the same assumptions of Theorem \ref{NoiseFreeThm}, we have there exists a $\mathcal{C}^*>0$ such that
	\begin{equation*}
	\left| \left(\be_k \cdot \nabla \right) f(\bx_c) - \left(\be_k \cdot \nabla_{\delta,h} \right)F(\bx_c,\mathcal{T}) \right| \leq C^*\left(h + \delta^2 \right).
	\end{equation*}	
\end{lemma}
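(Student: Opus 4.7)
The plan is to Taylor-expand each of the four building blocks of $(\be_k\cdot\nabla_{\delta,h})F(\bx_c,\mathcal{T})$ jointly in space and in time, exhibit the cancellations designed into the amplitude-correcting stencil, and then show that every remaining term is either $O(h)$ or $O(\delta^2)$ once the hypothesis $h^3<\delta$ is invoked.

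First I would substitute $F(\bx,t)=T(t)f(\bx)$ and split
\begin{equation*}
F_{2k}^{\be_k^+}-\bar F_{2k}^c = T(t_{2k})\bigl[f(\bx_c+\delta\be_k)-f(\bx_c)\bigr]+\Bigl[T(t_{2k})-\tfrac{T(t_{2k-1})+T(t_{2k+1})}{2}\Bigr]f(\bx_c).
\end{equation*}
Because $T\in\mathcal{C}^4(\mathcal{T})$, the symmetric temporal average annihilates the first temporal derivative, so a Taylor expansion of $T$ about $t_{2k}$ shows that the bracketed prefactor of $f(\bx_c)$ equals $-\tfrac{h^2}{2}T''(t_{2k})+O(h^4)$. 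A third-order Taylor expansion of $f$ about $\bx_c$ then gives
\begin{equation*}
F_{2k}^{\be_k^+}-\bar F_{2k}^c=T(t_{2k})\Bigl[\delta(\be_k\cdot\nabla)f(\bx_c)+\tfrac{\delta^2}{2}(\be_k\cdot\nabla)^2f(\bx_c)\Bigr]-\tfrac{h^2}{2}T''(t_{2k})f(\bx_c)+O(\delta^3)+O(h^4).
\end{equation*}
An analogous expression is obtained for the antipodal pair by expanding $T$ about $t_{2k+2}$ and $f$ about $\bx_c-\delta\be_k$.

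Next I would subtract the two expansions and divide by $2\mu_T\delta$. Antipodal symmetry in the spatial stencil cancels the even-order spatial derivatives up to the factor $T(t_{2k})-T(t_{2k+2})=-2hT'(t_{2k})+O(h^2)$, which contributes only an $O(h\delta)$ cross term; the odd-order spatial residual yields $O(\delta^2)$ after division by $\delta$. The leading piece becomes $\tfrac{T(t_{2k})+T(t_{2k+2})}{2\mu_T}(\be_k\cdot\nabla)f(\bx_c)$. Writing $\tilde T$ for this numerator-average and applying the mean-value theorem on $\mathcal{T}_{h,N}$ (of length $(4N+1)h$) yields $|\tilde T/\mu_T-1|\le \|T'\|_\infty(4N+1)h/\mu_T$, so the amplitude-prefactor error contributes $O(Nh)$. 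Finally, the residual temporal term $-\tfrac{h^2}{4\mu_T\delta}[T''(t_{2k})-T''(t_{2k+2})]f(\bx_c)$ evaluates via one more Taylor expansion to $\tfrac{h^3}{2\mu_T\delta}T'''(\eta)f(\bx_c)$, which under $h^3<\delta$ is bounded by a constant times $h$.

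Collecting all contributions, the total error is at most a constant depending on $\mu_T$, $N$, $\|T\|_{\mathcal{C}^4(\mathcal{T})}$, and $\|f\|_{\mathcal{C}^3(\Omega)}$ times $h+\delta^2$, which is exactly the claimed inequality. I expect the main obstacle to be the careful bookkeeping of the six or seven distinct error terms produced by the joint space-time expansion, and in particular pinpointing that the $O(h^3/\delta)$ residual from the temporal second-derivative mismatch is the unique place where the hypothesis $h^3<\delta$ is genuinely used; once this accounting is laid out, the remainder of the proof is a routine application of Taylor's theorem and the triangle inequality.
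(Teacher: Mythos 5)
Your decomposition is the same one the paper uses: adding and subtracting $T_{2k}f^c$ (and its counterpart at $t_{2k+2}$) isolates exactly the four pieces displayed in the paper's proof --- the pure central difference, the cross term $\left(T_{2k}-T_{2k+2}\right)\left(f^{\be_k^+}-f^c\right)$, the symmetric-time-average residual $\left(\left(T_{2k}-\bar{T}_{2k}\right)-\left(T_{2k+2}-\bar{T}_{2k+2}\right)\right)f^c$, and the amplitude mismatch against $\mu_T$ --- and your mean-value bound $|\tilde{T}/\mu_T-1|\le \|T'\|_\infty(4N+1)h/\mu_T$ is precisely the paper's observation that $T(t^*)=\mu_T$ for some $t^*\in\mathcal{T}$ with $|t_{2k+2}-t^*|<(4N+1)h$. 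Your $O(h)$, $O(h\delta)$, and $O(\delta^2)$ accountings of the first, second, and fourth pieces are correct and match the paper's estimates.

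The gap is in your final step. You correctly identify the leading part of the time-average residual as $\frac{h^3}{2\mu_T\delta}T'''(\eta)f(\bx_c)$, but the hypothesis $h^3<\delta$ gives only $h^3/\delta<1$, i.e.\ an $O(1)$ bound on that term, not the claimed ``constant times $h$''; to close your version of the estimate you would need the stronger condition $h^2\lesssim\delta$. The paper instead bounds this same residual by $\frac{h^4}{\mu_T\delta}\left|T^{(4)}(\xi_2)\right|\left|f^c\right|$, which under $h^3<\delta$ genuinely is $O(h)$ --- but that requires one more order of temporal cancellation than your expansion exhibits: the five-point combination $-\tfrac12 T_{2k-1}+T_{2k}-T_{2k+2}+\tfrac12 T_{2k+3}$ annihilates constants, linears, and quadratics but has nonvanishing third moment, so the $h^3T'''$ contribution you found does not cancel for free. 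You must therefore either justify the extra order the paper asserts or strengthen the hypothesis relating $h$ and $\delta$; as written, the inference ``$h^3<\delta$ implies this term is $O(h)$'' does not follow.
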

\begin{proof}
	We proceed by showing that our differencing stencil reduces to the central difference scheme with additional error sources to consider. Assemble the finite differences, and replace each term with its separated counterparts:
	\begin{eqnarray*}
	\left(\be_k \cdot \nabla_{\delta,h} \right)F(\bx_c,\mathcal{T}) & = & 	\frac{1}{2\mu_T \delta}  
	\left( 
	\left(F_{2k}^{\be_k^+} - \bar{F}_{2k}^{c} \right)
	- \left(F_{2k+2}^{\be_k^-} - \bar{F}_{2k+2}^{c}\right)
	\right)  \\
	& = & \frac{1}{2\mu_T \delta}  
	\left(
	\left(T_{2k}f^{\be_k^+} - \bar{T}_{2k}f^c \right) - \left(T_{2k+2}f^{\be_k^-} - \bar{T}_{2k+2}f^c \right) 
	\right).
	\end{eqnarray*}
	
	\noindent Add and subtract $T_{2k}f^c$ and $T_{2k+2}f^c$ to the right-hand side. When we combine like terms we see
	\begin{eqnarray*}\label{third_eq}
	\text{RHS}  & = &\frac{1}{2\mu_T \delta} T_{2k}\left(f^{\be^+} - f^c\right) \nonumber \\
	& + &\frac{1}{2\mu_T \delta} \left( \left(T_{2k} - \bar{T}_{2k}\right) - \left(T_{2k+2} - \bar{T}_{2k+2} \right) \right)f^c  \nonumber \\ 
	& - &\frac{1}{2\mu_T \delta} T_{2k+2}\left(f^{\be^-} - f^c\right).
	\end{eqnarray*}
	Next, we add and subtract $T_{2k+2}\left(f^{\be^+} - f^c \right)$ and $\mu_T \left(f^{\be^+}-f^{\be^-}\right)$, then collect like-terms such that
	\begin{eqnarray}\label{fourth_eq}
	\text{RHS} & = & \frac{f^{\be^+}-f^{\be^-}}{2\delta}  \nonumber \\
	& + & \frac{1}{2\mu_T \delta} \left(T_{2k}-T_{2k+2}\right)\left(f^{\be^+} - f^c\right) \nonumber \\
	& + &\frac{1}{2\mu_T \delta} \left( \left(T_{2k} - \bar{T}_{2k}\right) - \left(T_{2k+2} - \bar{T}_{2k+2} \right) \right)f^c  \nonumber \\ 
	& + &\frac{1}{2\mu_T \delta} \left(T_{2k+2} - \mu_T \right)\left(f^{\be^+}-f^{\be^-}\right) .
	\end{eqnarray}
	The RHS in \eqref{fourth_eq} includes four terms, for which the first is the central-difference term, and the remaining three account for the error introduced by the dynamic amplitude of $F$. We now estimate each term independently in \eqref{eq22} - \eqref{overEst}. The first of the three remaining terms can be directly estimated by Taylor's theorem in space and time. For the two that remain, we note that there exists a $t^* \in \mathcal{T}$ such that $T(t^*) = \mu_T$. We further note that $|t_{2k+2} - t^*| < (4N+1)h$, and use Taylor's theorem. As a result, there exist $\xi_1, \xi_2, \xi_3 \in \mathcal{T}$ such that
	\begin{align}
	\frac{1}{2\mu_T} \left|\left(T_{2k}-T_{2k+2}   \right) \left(f^{\be_k^+} - f^c \right) \right|	 \leq  	\frac{h}{2\mu_T} \left| T'(\xi_1) \right| \left| \frac{f^{\be_k^+} - f^c}{\delta} \right|, \label{eq22} \\
	\frac{1}{2\mu_T \delta} \left| \left( \left(  T_{2k} - \bar{T}_{2k}\right) - \left(T_{2k+2} - \bar{T}_{2k+2} \right) \right)f^c \right|  \leq  \frac{h^4}{\mu_T \delta } \left| T^{(4)}(\xi_2) \right| \left| f^c \right|, \\
	\frac{1}{2\mu_T \delta}  \left| \left(T_{2k+2} - \mu_T\right)\left(f^{\be_k^+} - f^{\be_k^-}\right) \right|   \leq  \frac{h(4N+1)}{\mu_T} \left| T'(\xi_3) \right| \left| \frac{f^{\be_k^+} - f^{\be_k^-}}{2\delta}  \right|. \label{overEst}
	\end{align}
	The statement now follows directly from Taylor's theorem and the triangle inequality.
\end{proof}

What follows next is a proof of Theorem \ref{NoiseFreeThm}, which extends the above result to higher dimensions.

\begin{proof}[Proof of Theorem \ref{NoiseFreeThm}]
    Begin by noting
    \begin{eqnarray*}
        \left|\left|  \nabla f(\bx_c) - \nabla_{\delta,h} F(\bx_c,\mathcal{T})\right|\right| &=& \left|\left|  \left(\nabla f(\bx_c) - \nabla_\delta f(\bx_c) \right) +\left( \nabla_\delta f(\bx_c) - \nabla_{\delta,h} F(\bx_c,\mathcal{T}) \right)\right|\right| \\
        &\leq& \left|\left|  \nabla f(\bx_c) - \nabla_\delta f(\bx_c) \right|\right| +\left|\left| \nabla_\delta f(\bx_c) - \nabla_{\delta,h} F(\bx_c,\mathcal{T}) \right|\right|.
    \end{eqnarray*}
	An $\mathcal{O}(\delta^2)$ estimate for the first term of the RHS follows directly from Taylor's theorem. For the second, let $\hat{\be}_i$ denote the $i$-th basis vector for $\mathbb{R}^n$. The $i^{th}$ term of the gradient estimate is written
	\begin{equation*}
	\hat{\be}_i^T \cdot \left( \nabla_{\delta,h}F(\bx_c,\mathcal{T})\right) = \sum_{k=1}^{N} \frac{ \hat{\be}_i^T \cdot \be_k }{2 \mu_T N \delta} \left[\left(F_{2k}^{\be_k^+} - \bar{F}_{2k}^{c} \right) - \left(F_{2k+2}^{\be_k^-} - \bar{F}_{2k+2}^{c}\right)\right].
	\end{equation*}
	The full result follows directly from the triangle inequality and Lemma \ref{mainLemma}.
\end{proof}

Before proceeding to the main result, we provide the following estimate on the expectation of n-dimensional Gaussian random variables. We make use of this result in the proof that follows. 
\begin{lemma}\label{expect_lemma} Let $\bq$ be an $n-$dimensional Gaussian random variable with covariance $\Sigma$. Then,
\begin{equation*}
    E\left[||\bq||\right] \leq 4 \sqrt{n||\Sigma||_2}
\end{equation*}
and for all $p \in (0,1)$
\begin{equation*}
    \mathcal{P}\left[ ||\bq||  \leq 4 \sqrt{n||\Sigma||_2} + 2 \sqrt{\log(1/p)||\Sigma||_2}\right] \geq 1-p.
\end{equation*}
\end{lemma}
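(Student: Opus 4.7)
The plan is to reduce everything to the isotropic case and then invoke a concentration inequality for Lipschitz functionals of a standard Gaussian. First I would write $\bq = \Sigma^{1/2} \bz$ with $\bz \sim \mathcal{N}(\mathbf{0}, I_n)$, so that $\|\bq\| = \|\Sigma^{1/2}\bz\|$. By Jensen's inequality,
\begin{equation*}
E[\|\bq\|] \;\leq\; \sqrt{E[\|\bq\|^2]} \;=\; \sqrt{E[\bz^T \Sigma \bz]} \;=\; \sqrt{\operatorname{tr}(\Sigma)} \;\leq\; \sqrt{n \|\Sigma\|_2},
\end{equation*}
since each eigenvalue of $\Sigma$ is bounded above by $\|\Sigma\|_2$. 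This already dominates $4\sqrt{n\|\Sigma\|_2}$ with plenty of room, giving the first inequality (the constant $4$ is loose but convenient for the second bound).

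For the tail statement, I would observe that the map $\varphi(\bz) := \|\Sigma^{1/2}\bz\|$ is Lipschitz with constant $L = \|\Sigma^{1/2}\|_2 = \sqrt{\|\Sigma\|_2}$, since for any $\bz, \bz'$,
\begin{equation*}
|\varphi(\bz) - \varphi(\bz')| \;\leq\; \|\Sigma^{1/2}(\bz - \bz')\| \;\leq\; \sqrt{\|\Sigma\|_2}\,\|\bz - \bz'\|.
\end{equation*}
Then I would apply the Borell--Tsirelson--Ibragimov--Sudakov Gaussian concentration inequality, which yields
\begin{equation*}
\mathcal{P}\left[\varphi(\bz) \geq E[\varphi(\bz)] + t\right] \;\leq\; \exp\!\left(-\frac{t^2}{2 L^2}\right) \;=\; \exp\!\left(-\frac{t^2}{2 \|\Sigma\|_2}\right).
\end{equation*}
Choosing $t = 2\sqrt{\log(1/p)\,\|\Sigma\|_2}$ makes the right-hand side $\exp(-2\log(1/p)) = p^2 \leq p$. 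Combining with the expectation bound and using $\sqrt{n\|\Sigma\|_2} \leq 4\sqrt{n\|\Sigma\|_2}$ completes the second inequality.

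The only step that requires care is the invocation of Gaussian concentration; everything else is Jensen plus bookkeeping. I would simply cite the Borell--TIS inequality rather than reprove it. The slack in the constants (4 instead of 1 for the mean, 2 instead of $\sqrt{2}$ for the deviation) appears to be cosmetic, absorbing any small discrepancy between $E[\|\bq\|]$ and its Jensen upper bound so that the final statement reads cleanly; if a tighter numerical constant were desired, one could replace $4$ by $1$ and $2$ by $\sqrt{2}$ without altering the argument.
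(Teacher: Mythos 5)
Your proof is correct. The paper itself gives no proof of this lemma---it only cites Vershynin and Wainwright---and your argument (Jensen's inequality yielding $E[\|\bq\|]\leq\sqrt{\operatorname{tr}(\Sigma)}\leq\sqrt{n\|\Sigma\|_2}$, followed by the Borell--TIS inequality applied to the $\sqrt{\|\Sigma\|_2}$-Lipschitz map ${\bf z}\mapsto\|\Sigma^{1/2}{\bf z}\|$) is precisely the standard derivation found in those references, with the constants $4$ and $2$ absorbed with room to spare. The only implicit assumption worth stating explicitly is that $\bq$ is centered, which holds in the paper's application since the noise terms have mean zero.
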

Proof of this theorem can be found in \cite{vershynin2018high,wainwright2019high}.

Lastly, the results above can be combined to illicit a a proof of Theorem \ref{NoisyThm}, which is provided below.

\begin{proof}[Proof of Theorem \ref{NoisyThm}]
Recall $\hat{\be}_i$ denotes the $i$-th standard basis vector of $\mathbb{R}^n$, and $\be_k$ denotes a unit vector co-linear with the $k$-th antipodal pair. Let ${\bf \varsigma}$ denote the summation of independent noise terms, i.e.,
\begin{equation*}
    \varsigma_k = \frac{\varsigma_{1,k}}{2} + \varsigma_{2,k} + \varsigma_{3,k} + \frac{\varsigma_{4,k}}{2}.
\end{equation*}
We see that ${\bf \varsigma} \sim \mathcal{N}(0,3\sigma^2)$, where $\sigma^2$ is the given variance of the additive noise. We can then write
\begin{equation*}
    \varepsilon_i = \sum_{k=1}^N \frac{\hat{\be}_i^T \cdot \be_k}{2 \mu_T \delta N} \varsigma_k,
\end{equation*}
which can be written
\begin{equation*}
    {\bf \varepsilon} = {\bf A \varsigma}.
\end{equation*}
Thus, $\bf \varepsilon$ is distributed $\mathcal{N}\left(0,3\sigma^2 {\bf A A}^T \right).$ Next, we define $\Sigma = 3\sigma {\bf A A}^T$, from which we compute
\begin{eqnarray*}
\Sigma_{i,j} &=& 3\sigma^2({\bf A A}^T)_{i,j} \\
& = & \frac{3 \sigma^2}{4 \mu_T^2 \delta^2 N^2} \sum_{k=1}^N (\hat{\be}_i^T \cdot \be_k)^T(\hat{\be}_j^T \cdot \be_k) \\
& = & \frac{3 \sigma^2}{4 \mu_T^2 \delta^2 N^2} \left( \sum_{k=1}^N ||\be_k||^2 \right) \left(\hat{\be}_i^T \cdot \hat{\be}_j\right) 
\end{eqnarray*}
and therefore
\begin{equation}\label{sigma_estimate}
  ||\Sigma||_2 = \frac{3\sigma^2}{4\mu_T^2 \delta^2 N}.
\end{equation}
The statement now follows directly from Lemma \ref{expect_lemma}.
\end{proof}

\end{document}